\newtheorem{theorem}{Theorem}
\newtheorem{lemma}[theorem]{Lemma}
\newtheorem{corollary}[theorem]{Corollary}
\newtheorem{fact}[theorem]{Fact}
\newtheorem{definition}[theorem]{Definition}
\newtheorem{observation}[theorem]{Observation}
\newcommand{\cM}{\mathcal{M}}
\newcommand{\cA}{\mathcal{A}}
\newcommand{\cI}{\mathcal{I}}
\newcommand{\ind}{\mathbf{1}}
\newcommand{\eps}{\epsilon}
\newcommand{\N}{\mathbb{N}}
\newcommand{\Z}{\mathbb{Z}}
\newcommand{\R}{\mathbb{R}}
\newcommand{\E}{\mathbb{E}}
\newcommand{\cD}{\mathcal{D}}
\newcommand{\cJ}{\mathcal{J}}
\newcommand{\bv}{\mathbf{v}}
\newcommand{\bu}{\mathbf{u}}
\newcommand{\by}{\mathbf{y}}
\newcommand{\bz}{\mathbf{z}}
\newcommand{\bp}{\mathbf{p}}
\newcommand{\bq}{\mathbf{q}}
\newcommand{\bhp}{\hat{\bp}}
\renewcommand{\phi}{\varphi}
\newcommand{\bphi}{\bm{\phi}}
\newcommand{\hn}{\hat{n}}
\DeclareMathOperator{\Geo}{Geo}
\DeclareMathOperator{\argmin}{argmin}
\title{Tight Bounds for Differentially Private Anonymized Histograms}
\author{Pasin Manurangsi\thanks{Google Research. Email address: \texttt{\small pasin@google.com}.}
}
\date{}
\begin{document}

\maketitle


\begin{abstract}
In this note, we consider the problem of differentially privately (DP) computing an \emph{anonymized histogram}, which is defined as the multiset of counts of the input dataset (without bucket labels). In the low-privacy regime $\eps \geq 1$, we give an $\eps$-DP algorithm with an expected $\ell_1$-error bound of $O(\sqrt{n} / e^\eps)$. In the high-privacy regime $\eps < 1$, we give an $\Omega(\sqrt{n \log(1/\eps) / \eps})$ lower bound on the expected $\ell_1$ error. In both cases, our bounds asymptotically match the previously known lower/upper bounds due to~\cite{Suresh19}.
\end{abstract}

\section{Introduction}

In the \emph{histogram} problem, each user $i$’s input $x_i$ is one of the buckets from a universe $[B] := \{1, \dots, B\}$. The goal is to compute the count of every bucket $b$, i.e. $n_b := |\{i \mid x_i = b\}|$. We would like to estimate the counts in such a way that user’s privacy is respected under the notation of \emph{differential privacy (DP)}~\cite{dwork2006calibrating,dwork2006our}:

\begin{definition}[Differential Privacy~\cite{dwork2006calibrating}]
For $\eps, \delta > 0$, a randomized algorithm $\cA$ is $(\eps, \delta)$-DP if, for any set $S$ of outputs and any pair of neighboring datasets $\cD, \cD'$, we have $\Pr[\cA(\cD) \in S] \leq e^{\eps} \cdot \Pr[\cA(\cD') \in S] + \delta$. When $\delta = 0$, we write $\eps$-DP as a shorthand for $(\eps, 0)$-DP.
\end{definition}

Here $\cD, \cD'$ are said to be neighbors if $\cD'$ results from adding or removing a single user from $\cD$.

Differentially private histogram is one of the most well-studied problems in DP and (asymptot- ically) optimal errors are known across a variety of parameter settings and models of differential privacy (e.g.~\cite{dwork2006calibrating,MishraS06,KorolovaKMN09,HardtT10,HsuKR12,ErlingssonPK14,BassilyS15,BunNS19,BalcerC20,BalcerCJM21,GKM21}).

In many applications, however, we are not interested in the \emph{buckets} themselves. In these scenarios, it suffices to consider the \emph{anonymized histogram} (aka \emph{histogram of histogram}) of the dataset which is defined as $(n^{(1)}, n^{(2)}, \cdots, n^{(B)})$ where $n^{(i)}$ denote the $i$-th largest number among the counts $n_b$’s. For example, if the input dataset is $(x_1,x_2,x_3,x_4,x_5) = (1,1,3,2,3)$, then the histogram is $(n_1,n_2,n_3) = (2,1,2)$ whereas the anonymized histogram is $(n^{(1)},n^{(2)},n^{(3)}) = (2,2,1)$. It is known that anonymized histogram is sufficient to estimate symmetric properties of discrete distributions~\cite{AcharyaDOS17,Suresh19}; it also has applications in releasing password frequency lists~\cite{BlockiDB16} and degree sequences of graphs~\cite{KarwaS12}.

Throughout this note, we focus on the problem of DP anonymized histogram under the $\ell_1$-error, which is defined as $\sum_i \|n^{(i)} - \hn^{(i)}\|$ where $(n^{(i)})_i$ denotes the true anonymized histogram and $(\hn^{(i)})_i$ denotes the output estimate. We note that other measures of error have also been studied for DP anonymized histograms; see e.g.~\cite{HayLMJ09,HayRMS10} which focus on $\ell^2_2$ error. Similar to previous works, we focus on the regime where the universe $B$ is unbounded\footnote{This is only used for lower bounds; when $B$ is small, computing DP histogram directly yields errors of $O(|B|/\eps)$ in the high-privacy regime and $O(|B|/e^\eps)$ in the low-privacy regime which can be smaller than the bounds below.}.

Despite being a natural variant of the histogram problem, the optimal errors achievable by DP algorithms for anonymized histogram are not yet fully understood. Blocki et al.~\cite{BlockiDB16} initiated the study of DP anonymized histogram under the $\ell_1$ error. They observe that the problem is equivalent to that of privatizing \emph{integer partitions} (see \Cref{sec:overview} for more detail). Since there are only $\exp(O(\sqrt{n}))$ distinct integer partitions~\cite{HR18}, Blocki et al. showed that one can achieve an expected error of $O(\sqrt{n}/\eps)$ via the $\eps$-DP exponential mechanism~\cite{McSherryT07}. Generally, the exponential mechanism is not efficient; nonetheless, Blocki et al. provided a variant of the algorithm that is $(\eps, \delta)$-DP and achieves a similar expected error but runs in time $poly(n, \log(1/\delta), 1/\eps)$. Subsequently,
Ald{\`{a}} and Simon~\cite{AldaS18} show that any $\eps$-DP algorithm for $1/8 > \eps > \Omega(\frac{1}{n})$ must incur an error of $\Omega(\sqrt{n/\eps})$. Recently, Suresh~\cite{Suresh19} proved two arguably surprising results. First, he provides an $\eps$-DP algorithm for the “high-privacy” regime $\eps \leq 1$ with error $O(\sqrt{n \log(1/\eps)/\eps})$, which in light of the aforementioned lower bound is tight up to a factor of $O(\sqrt{\log(1/\eps)})$. Furthermore, he devises an algorithm with error $O(\sqrt{n}/e^{c\eps})$ where $c$ is a small constant in the “low-privacy” regime $\eps > 1$; in this setting, he also shows a lower bound of $\Omega(\sqrt{n}/e^\eps)$. His upper and lower bounds in the low-privacy regime match up to a constant factor in front of $\eps$, although one should note that this still leaves a (potentially large) multiplicative gap of $O(e^{(1-c)\eps})$ between the two bounds.

\section{Our Contributions}

Our main contribution is closing these two gaps left from~\cite{Suresh19}. Specifically, in the low-privacy regime $\eps > 1$, we give an algorithm with expected error of $O(\sqrt{n}/e^{\eps})$, while in the high-privacy regime $\eps \leq 1$, we prove a lower bound of $\Omega(\sqrt{n \log(1/\eps)/\eps})$ on the expected error. Both of which are tight up to constant multiplicative factors. The latter also negatively answers an open question of Ald{\`{a}} and Simon~\cite{AldaS18} who asked whether the bound of $\Omega(\sqrt{n/\eps})$ proved in their paper is tight. We also remark that, while previous works focus their efforts on proving lower bounds for $\eps$-DP algorithms, our lower bound holds naturally also for $(\eps, \delta)$-DP algorithms where $\delta$ is sufficiently small (i.e. $\delta = O(1/\eps)$). Thus, our results also show that allowing approximate-DP does not help for the DP anonymized histogram problem.

\definecolor{ashgrey}{rgb}{0.7, 0.75, 0.71}

\begin{table*}[t]
    \centering
\begin{tabular}{|c|c|c|c|c|c|}
    \cline{3-6}
      \multicolumn{2}{c|}{} & \multicolumn{2}{c|}{Previous Work} & \multicolumn{2}{c|}{This Note}\\
    \cline{3-6}
          \multicolumn{1}{c}{} &
          \multicolumn{1}{c|}{Bounds} & $\ell_1$ Error & Ref. & $\ell_1$ Error & Ref.\\ 
         \hline
         $\eps$-DP with $\eps \geq 1$ & Upper & $O(\sqrt{n} / e^{c \eps})$ for some $c > 0$ & \cite{Suresh19} & $O(\sqrt{n} / e^{\eps})$ & Theorem~\ref{thm:alg} \\ 
         (Low-Privacy) & Lower & $\Omega(\sqrt{n} / e^{\eps})$ & \cite{Suresh19} & - & -\\
      \hline
      $\eps$-DP with $\eps < 1$ & Upper & $O(\sqrt{n \log(1/\eps) / \eps})$ & \cite{Suresh19} & - & - \\
      (High-Privacy) & Lower & $\Omega(\sqrt{n / \eps})$ &  \cite{AldaS18} & $\Omega(\sqrt{n \log(1/\eps) / \eps})$ & Theorem~\ref{thm:lb-main} \\
    \hline
    \end{tabular}
     \caption{Summary of our results and previous results. Our lower bound also holds for $(\eps ,\delta)$-DP for any sufficiently small $\delta$ (depending only on $\eps$). The lower bounds in the high-privacy regime of our setting requires $\eps \geq \Omega(\log n/n)$ whereas that of~\cite{AldaS18} requires $\eps \geq \Omega(1/n)$; these are necessary because the error of $O(n)$ is trivial to achieve (by outputting the all-zero vector).}
    \label{table:summary}
  \end{table*}

\subsection{Proof Overview}
\label{sec:overview}

For the rest of this note, we will use the following equivalent definition of the problem in terms of privatizing integer partitions. An \emph{integer partition} of size $n$ is an infinite dimensional vector $\bp = (p_1, p_2, \dots)$ where $p_1 \geq p_2 \geq \cdots$ are non-negative integers and $\sum_{i \in \N} p_i = n$. Two integer partitions $\bp, \bp'$ are considered neighbors iff $\|\bp - \bp'\|_1 \leq 1$. The problem of DP integer partition of size at most $n$ is as follows: given an integer partition $\bp$ of size at most $n$, output a DP estimate $\bhp$ that minimizes the expected $\ell_1$-error, i.e. $\E[\|\bhp - \bp\|_1]$.

It is simple to see that the DP integer partition problem is equivalent to that of DP anonymized histogram, because an anonymized histogram can be viewed as an integer partition and vice versa\footnote{Given a size-$n$ integer partition $\bp = (p_i)_{i \in \N}$, we can create an input dataset where $p_i$ users have their inputs equal to $i$.} while preserving the neighboring relationship. Note that this equivalence is known and is also stated in~\cite{BlockiDB16,AldaS18}.

To describe the algorithm, we will also need an alternative representation of integer partitions: for a given integer partition $\bp$, its cumulative prevalence at $r \in \N$ is defined as $\phi_{\geq r}(\bp) := |\{i \in \N \mid p_i \geq r\}|$. For brevity, we also write $\bphi_\geq(\bp) := (\phi_{\geq 1}(\bp), \phi_{\geq 2}(\bp), \cdots)$. A crucial observation (shown in~\cite{Suresh19}) is that the $\ell_1$-distance in $\bp$ is the same as that of $\bphi_{\geq}$:

\begin{observation}[\cite{Suresh19}] \label{obs:l1-cum}
For any integer partitions $\bp, \bp'$, $\|\bp - \bp'\|_1 = \|\bphi_{\geq}(\bp) - \bphi_{\geq}(\bp')\|_1$.
\end{observation}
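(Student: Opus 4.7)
My plan is to interpret both sides as the cardinality of the symmetric difference of two Young diagrams. To each integer partition $\bp$ associate its Young diagram
\[
A(\bp) := \{(i,r) \in \N \times \N : p_i \geq r\}.
\]
By the defining monotonicity $p_1 \geq p_2 \geq \cdots$, every row of $A(\bp)$ is a prefix of $\N$, namely $\{r : (i,r) \in A(\bp)\} = \{1,\ldots,p_i\}$, and every column is likewise a prefix, $\{i : (i,r) \in A(\bp)\} = \{i : p_i \geq r\} = \{1,\ldots,\phi_{\geq r}(\bp)\}$ (the latter because if $p_i \geq r$ and $i' \leq i$ then $p_{i'} \geq p_i \geq r$).

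First, I would compute $|A(\bp)\triangle A(\bp')|$ by summing over rows. For each fixed $i$, the symmetric difference of two prefixes $\{1,\ldots,p_i\}$ and $\{1,\ldots,p_i'\}$ has cardinality exactly $|p_i - p_i'|$, so
\[
|A(\bp) \triangle A(\bp')| \;=\; \sum_{i \in \N} |p_i - p_i'| \;=\; \|\bp - \bp'\|_1.
\]
Second, the identical computation carried out column by column, using that column $r$ of each diagram is the prefix $\{1,\ldots,\phi_{\geq r}(\bp)\}$ (resp.\ $\{1,\ldots,\phi_{\geq r}(\bp')\}$), yields
\[
|A(\bp) \triangle A(\bp')| \;=\; \sum_{r \in \N} |\phi_{\geq r}(\bp) - \phi_{\geq r}(\bp')| \;=\; \|\bphi_{\geq}(\bp) - \bphi_{\geq}(\bp')\|_1.
\]
Combining the two displays finishes the proof.

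The only point requiring any care—and hardly a real obstacle—is the verification that both rows and columns of $A(\bp)$ are prefixes of $\N$, which is exactly where the sortedness hypothesis on integer partitions is used. This is essentially the classical statement that a partition and its conjugate have equal total size, upgraded from the diagrams themselves to their symmetric difference.
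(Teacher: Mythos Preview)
Your argument is correct: interpreting $A(\bp)$ as the Young diagram and computing $|A(\bp)\triangle A(\bp')|$ once row-by-row and once column-by-column is precisely the right way to see the identity, and your use of the monotonicity of $\bp$ to conclude that each column is a prefix is exactly where the hypothesis enters. (One tiny imprecision: the fact that \emph{rows} are prefixes follows directly from the definition $A(\bp)=\{(i,r):p_i\geq r\}$ and does not itself require sortedness; only the column statement does.)

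As for comparison: the paper does not supply its own proof of this observation. It is stated with attribution to~\cite{Suresh19} and used as a black box. Your Young-diagram/symmetric-difference argument is the standard and natural proof, and would serve perfectly well if one wished to make the note self-contained.
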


Below we give high-level overviews of our proofs. We will sometimes be informal for simplicity of presentation; all proofs will be formalized in later sections.

\paragraph{Algorithm for the Low-Privacy Regime.}
Suresh’s algorithm~\cite{Suresh19} for the low-privacy regime comes from the following observations. First, if we were told that \emph{all} of $p_i$’s are either at least $\sqrt{n}$ or equal to 0, then we can immediately arrive at an $\eps$-DP algorithm with $O(\sqrt{n}/e^\eps)$: by adding Geometric (aka Discrete Laplace) noise\footnote{See \Cref{sec:geom} for a more detail description of the distribution and its privacy guarantees.} to each of $p_1,\cdots , p_{\sqrt{n}}$. On the other hand, if
we were in the other extreme case where \emph{all} of $p_i$’s are less than $\sqrt{n}$, then we can also achieve a similar error guarantee: by adding Geometric noise to $\phi_{\geq 1}(\bp),  \cdots, \phi_{\geq \sqrt{n}}(\bp)$. In other words, each of the two extreme cases can be easily handled by either adding noise to just the counts themselves or to the cumulative prevalences. Of course, this does not constitute an algorithm for the general case yet. To handle the general inputs,~\cite{Suresh19} ``splits'' the partition into two parts based on the \emph{counts}. More precisely, a threshold $\tau \approx \sqrt{n}$ is selected and the integer partition is split into two parts: based on whether $p_i \geq \tau$. The ``high-count'' part is then handled by adding noise to the counts whereas the ``low-count'' part is handled by adding noise to the cumulative prevalences. At the end, the final output is then computed as the ``union'' of the two. Although this high-level idea is simple, it is \emph{not} differential private yet due to the fact that splitting based on count can result in a large sensitivity in each of the two parts: if $\bp$ has $p_i = \tau - 1$ but its neighbor $\bp'$ has $p'_i = \tau$, then the ``high-count'' part of $\bp$ and the ``high-count'' part of $\bp'$ differ by $\tau$ (in the $\ell_1$ distance) and this would have required a much larger amount of noise added to preserve privacy. To overcome this problem,~\cite{Suresh19} also makes the split itself differentially private by adding noise to the prevalences at $\tau$ and $\tau + 1$. This considerably complicates the algorithm--which now has to also add ``dummy counts'' at $\tau, \tau + 1$ to avoid the noisy prevalences becoming negative (these dummies need to be removed later)--and its privacy analysis--which requires novel tools like a ``dataset dependent composition theorem'' and several case analyses.

Our simple, and in hindsight obvious, observation is that this complication can be avoided entirely by splitting based on the \emph{rank}. That is, we simply let the first ``high-count'' part be $(p_1, \dots, p_{\sqrt{n}})$ and the second ``low-count'' part be $(p_{\sqrt{n} + 1},p_{\sqrt{n} + 2}, \dots)$. We can then proceed in a similar manner as~\cite{Suresh19}: add noise to the counts in the ``high-count'' part and to the cumulative prevalences in the ``low-count'' part, and then combine them to get the final answer. Here we do not encounter any issue with splitting at all and our analysis uses only elementary tools in differential privacy. Our algorithm is presented and analyzed in full detail in \Cref{sec:alg}.

We end by remarking that our ``split based on rank'' observation also simplifies the high-privacy algorithm of~\cite{Suresh19}. However, this does not change the asymptotic errors since a constant factor in $\eps$ only affects the error by a constant factor. Moreover, we stress that the splitting step is not the most complicated step in the high-privacy algorithm of~\cite{Suresh19}, which we will discuss more next.

\paragraph{Lower Bound for the High-Privacy Regime.}
To understand our lower bound, it is important to have a high-level understanding of the high-privacy algorithm of~\cite{Suresh19}. If one were to use the splitting and adding noise directly in this regime, the error incurred will be $O(\sqrt{n}/\eps)$ because the noise required to add to each count/cumulative prevalence is $O(1/\eps)$ in the case of $\eps < 1$.

To reduce the dependency of $\eps$ from $1/\eps$ to $\sqrt{\log(1/\eps)/\eps}$, Suresh~\cite{Suresh19} changes the second ``low-count'' part to consider $\phi_{\geq v}(\bp)$ for only $v \in V$ where $V \subseteq [\sqrt{n}/\eps]$ is such that its consecutive
elements have multiplicative gaps of roughly $1 + \sqrt{\frac{\log(1/\eps)}{\eps n}}$. The crucial points here are that (i) $(\phi_{\geq v}(\bp))_{v \in V}$ is enough to estimate the integer partition to within an error of $O\left(\sqrt{\frac{n \log(1/\eps)}{\eps}}\right)$ and (ii) the cardinality of $V$ is $O\left(\sqrt{\eps n \log(1/\eps)}\right)$, a multiplicative factor of $O(\sqrt{\eps \log(1/\eps)})$ smaller compared to $\sqrt{n}$. Intuitively speaking, (ii) also means that one can add less noise to them (compared to e.g. adding noise to all cumulative prevalences below $\sqrt{n}$). Formally exploiting (ii) however is somewhat complicated and is not needed for our lower bound; thus we will not go into detail here.

Now that we have a high level intuition of the algorithm, we briefly note that the lower bounds in~\cite{AldaS18,Suresh19} proceed by “encoding” each boolean vector $\bu \in \{0,1\}^m$ to an integer partition with a property that the distance of any two vectors are preserved (after appropriate scaling); this suffices to prove a lower bound due to a known strong lower bound for privatizing boolean vectors (see \Cref{lem:vec-dp-lb}). Our proof proceeds similarly but use a new encoding, guided by the above algorithm. Roughly speaking, we associate $[m]$ with the set $V$ and then start by encoding the all-zero vector to the elements of $V$ where $v \in V$ repeated a number of times proportional to $1/v$. If the associated coordinate in $u$ is one, we increase these values $v$’s to the next value in $V$. Otherwise, we keep them the same. This concludes our high-level overview of the proof; details are formalized in \Cref{sec:lb}.

We end this section by noting that our lower bound also gives a result on the size of packing of integer partitions under $\ell_1$ distance--arguably not surprising given the connection between packing and lower bounds in DP~\cite{HardtT10}--which nonetheless might be of independent interest. (See \Cref{app:packing}.)

\section{Notations and Preliminaries}

We use $\log$ to denote logarithm base 2 and, for any positive integer $m$, let $[m] := \{1, \cdots, m\}$ and $[m + 1:] := \N \setminus [m] = \{m + 1, \dots\}$. We use boldface letters for variables that represent vectors, and standard letters for those representing scalars. For a vector $\bv \in \R^{\cJ}$ and $\cJ' \subseteq \cJ$, we write $\bv_{\cJ'}$ to denote the restriction of $\bv$ on the coordinates in $\cJ'$, i.e. $(v_{j'})_{j' \in \cJ'}$. We often have to deal with vectors with trailing zeros; in this case, it is more convenient to write down just the non-zero part. To facilitate this, we define the following notion of $\ell_1$-distance for vectors of different dimensions: for $\bv \in \R^m, \bv' \in \R^{m'}$, let $\|\bv - \bv'\|_1 := \sum_{i \in [\max\{m,m'\}]} |z_i|$ where
\begin{align*}
z_i :=
\begin{cases}
v_i - v'_i &\text{ if } i \leq m, m', \\
v_i &\text{ if } m' < i \leq m, \\
-v'_i &\text{ if } m < i \leq m'.
\end{cases}
\end{align*}

\subsection{Integer Partition}

For every non-negative integer $n$, we use $\cI_n$ to denote the set of all size-$n$ integer partition $\bp$, i.e. those such that $\sum_i p_i = n$. We also use $\cI_{\leq n}$ to denote $\bigcup_{n' \leq n} \cI_{n'}$ and $\cI$ to denote $\bigcup_{n \in \N} \cI_n$. We say that two integer partitions $\bp$ and $\bp'$ are neighbors iff $\|\bp - \bp'\|_1 = 1$. For any $n \in \N$, the (expected $\ell_1$-)error of an algorithm $\cA$ for integer partition of size at most $n$ is defined\footnote{Note that in the main body of this note we assume that the algorithm $\cA$ knows that the input $\bp$ belongs to $\cI_{\leq n}$; please refer to \Cref{sec:open} and \Cref{app:unknown-n} for more discussion on this.} to be $\max_{\bp \in \cI_{\leq n}} \E[\|\cA(\bp) - \bp\|_1]$.

We define the \emph{union} of two partitions $\bp^1, \bp^2$, denoted by $\bp^1 \cup \bp^2$, in a natural manner i.e. $(\bp^1 \cup \bp^2)_i$ is defined as the $i$-th highest count in the multiset $\{p^1_i \mid i \in \N\} \cup \{p^2_i \mid i \in \N\}$. Note that this implies $\bphi_{\geq}(\bp^1 \cup \bp^2) = \bphi_{\geq}(\bp^1) + \bphi_{\geq}(\bp^2)$. This, together with \Cref{obs:l1-cum}, immediately yields:
\begin{observation} \label{obs:union-lower-error}
Let $\bp^1, \bp^2, \bq^1, \bq^2 \in \cI$. Then, $\|(\bp^1 \cup \bp^2) - (\bq^1 \cup \bq^2)\|_1 \leq \|\bp^1 - \bq^1\|_1 + \|\bp^2 - \bq^2\|_1$.
\end{observation}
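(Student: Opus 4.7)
The plan is to chain together the two identities already stated in the surrounding text, which is exactly the "immediately yields" that the paper asserts. The key is to move from the $\ell_1$ distance on integer partitions to the $\ell_1$ distance on their cumulative prevalence vectors, where the union operation becomes coordinate-wise addition and the triangle inequality can be applied directly.

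Concretely, I would start from the right-hand side translation via \Cref{obs:l1-cum}:
\begin{align*}
\|(\bp^1 \cup \bp^2) - (\bq^1 \cup \bq^2)\|_1 = \|\bphi_{\geq}(\bp^1 \cup \bp^2) - \bphi_{\geq}(\bq^1 \cup \bq^2)\|_1.
\end{align*}
Next I would use the additivity identity $\bphi_{\geq}(\bp^1 \cup \bp^2) = \bphi_{\geq}(\bp^1) + \bphi_{\geq}(\bp^2)$ (and similarly for the $\bq$'s), which is noted in the paragraph preceding the observation, to rewrite the right-hand side as
\begin{align*}
\bigl\|\bigl(\bphi_{\geq}(\bp^1) - \bphi_{\geq}(\bq^1)\bigr) + \bigl(\bphi_{\geq}(\bp^2) - \bphi_{\geq}(\bq^2)\bigr)\bigr\|_1.
\end{align*}
Applying the ordinary triangle inequality for $\ell_1$ (on possibly-truncated vectors, using the paper's convention for mismatched dimensions) bounds this by $\|\bphi_{\geq}(\bp^1) - \bphi_{\geq}(\bq^1)\|_1 + \|\bphi_{\geq}(\bp^2) - \bphi_{\geq}(\bq^2)\|_1$, and a second application of \Cref{obs:l1-cum} converts each term back to $\|\bp^i - \bq^i\|_1$, completing the proof.

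There is essentially no obstacle: the additivity $\bphi_{\geq}(\bp^1 \cup \bp^2) = \bphi_{\geq}(\bp^1) + \bphi_{\geq}(\bp^2)$ follows directly from the definition of $\cup$ as the multiset union of counts (an index $i$ with $p^1_i \geq r$ or $p^2_i \geq r$ contributes exactly once to the count of entries $\geq r$ in the union, with multiplicities adding). The only mild care needed is the dimension convention for the $\ell_1$ norm on vectors of different lengths, but the paper's definition of $\|\cdot\|_1$ already handles this gracefully by padding with zeros, so the triangle inequality applies unchanged.
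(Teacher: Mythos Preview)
Your proposal is correct and follows precisely the route the paper indicates: it uses \Cref{obs:l1-cum} to pass to cumulative prevalences, invokes the additivity $\bphi_{\geq}(\bp^1 \cup \bp^2) = \bphi_{\geq}(\bp^1) + \bphi_{\geq}(\bp^2)$ stated just before the observation, applies the triangle inequality, and converts back. This is exactly what the paper means by ``immediately yields,'' so there is nothing to add.
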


\subsection{Differential Privacy}

Unless stated otherwise, the input to our algorithm will be integer-valued vectors and two datasets are consider neighbors iff their $\ell_1$-distance is at most one. We now recall a few additional tools from differential privacy.

\subsubsection{Group Privacy}

We say that two datasets $\cD, \cD'$ are $k$-neighboring if their exists a sequence $\cD_0 = \cD, \cD_1, \dots, \cD_{k - 1}, \cD_k = \cD'$ such that $\cD_{i - 1}, \cD_i$ are neighboring. Note that under the $\ell_1$ distance, two datasets $\bu, \bu'$ are $k$-neighboring iff $\|\bu - \bu'\|_1 \leq k$. We will use the following so-called \emph{group differential privacy} (see e.g. \cite[Fact 2.3]{SteinkeU16}) in our lower bound proof.

\begin{fact}[Group Privacy] \label{fact:group-dp}
Let $\cA$ be any $(\eps, \delta)$-DP algorithm. Then, for any $k$-neighboring database $\cD, \cD'$ and every subset $S$ of the output, we have $\Pr[\cA(\cD) \in S] \leq e^{k\eps} \cdot \Pr[\cA(\cD') \in S] + \frac{e^{k\eps} - 1}{e^{\eps} - 1} \cdot \delta$.
\end{fact}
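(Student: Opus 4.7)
The plan is to prove this by induction on $k$, which is the textbook approach for group privacy. The base case $k = 1$ is literally the definition of $(\eps,\delta)$-DP applied to neighboring datasets $\cD, \cD'$, and one checks that the coefficient in front of $\delta$ collapses to $\frac{e^\eps - 1}{e^\eps - 1} = 1$, so the statement reduces exactly to the definition.

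For the inductive step, suppose the claim holds for $k - 1$ and let $\cD = \cD_0, \cD_1, \ldots, \cD_k = \cD'$ be a neighboring chain. Applying the inductive hypothesis to the $(k-1)$-neighboring pair $\cD, \cD_{k-1}$ and then the definition of DP to the neighboring pair $\cD_{k-1}, \cD'$ gives
\begin{align*}
\Pr[\cA(\cD) \in S] &\leq e^{(k-1)\eps} \Pr[\cA(\cD_{k-1}) \in S] + \tfrac{e^{(k-1)\eps} - 1}{e^\eps - 1}\,\delta \\
&\leq e^{(k-1)\eps}\bigl(e^\eps \Pr[\cA(\cD') \in S] + \delta\bigr) + \tfrac{e^{(k-1)\eps} - 1}{e^\eps - 1}\,\delta.
\end{align*}
The $\Pr[\cA(\cD') \in S]$ term already has the desired coefficient $e^{k\eps}$, so what remains is an algebraic simplification of the coefficient on $\delta$, namely
\begin{equation*}
e^{(k-1)\eps} + \tfrac{e^{(k-1)\eps} - 1}{e^\eps - 1} = \tfrac{e^{(k-1)\eps}(e^\eps - 1) + e^{(k-1)\eps} - 1}{e^\eps - 1} = \tfrac{e^{k\eps} - 1}{e^\eps - 1},
\end{equation*}
which matches the desired bound. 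There is no real obstacle here: the only thing to be slightly careful about is that the inductive hypothesis is applied with the tight geometric-series coefficient (not a loose bound), because it is precisely the telescoping structure of $\frac{e^{k\eps} - 1}{e^\eps - 1} = 1 + e^\eps + \cdots + e^{(k-1)\eps}$ that makes the induction close cleanly; using any weaker form of the hypothesis would fail to produce the clean closed form on the right-hand side. Since this is a standard fact (cited from Steinke--Ullman), I would likely just sketch the induction in one or two lines rather than expand it fully.
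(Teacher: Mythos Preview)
Your induction argument is correct and is the standard textbook proof of group privacy. Note that the paper does not actually give its own proof of this fact: it simply cites \cite[Fact 2.3]{SteinkeU16} and states the result, so there is nothing to compare against beyond observing that your sketch is exactly the usual derivation one finds in the cited reference.
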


\subsubsection{Geometric Mechanism}
\label{sec:geom}

We will employ the Geometric mechanism~\cite{GhoshRS12} in our algorithm. Recall that the (two-sided) Geometric distribution with parameter $\alpha < 1$, denoted by $\Geo(\alpha)$, has probability mass $\alpha^{|i|} \cdot \frac{1-\alpha}{1+\alpha}$ at each integer $i \in \Z$. Let $f$ be any function whose output is an integer-valued vector. The Geometric
mechanism (with parameter $\alpha$), simply adds a noise drawn from $\Geo(\alpha)$ to each coordinate of $f(\cD)$ independently. It is known that if the $\ell_1$-sensitivity of $f$, defined as $\max_{\text{neighboring } \cD,\cD'} \|f(\cD) - f(\cD')\|_1$, is at most one, then the Geometric mechanism with parameter $\alpha = e^{-\eps}$ is $\eps$-DP. Note that the $\Geo(e^{-\eps})$ distribution has expected absolute value of $O(1/\eps)$ for $0 < \eps \leq 1$ and $O(e^{-\eps})$ for $\eps > 1$.

\subsubsection{Lower Bound for Privatizing Boolean Vectors}

Our final tool is a lower bound on privatizing a boolean vector, which is stated formally below. (Recall that two vectors are neighbors iff their $\ell_1$-distance is at most one.) We note that a similar bound was proved already in~\cite{Suresh19} but only for $\delta = 0$; here we extend to $\delta > 0$ in a straightforward manner; for completeness, we provide its proof in Appendix A.

\begin{lemma} \label{lem:vec-dp-lb}
Let $\eps, \delta > 0$ and $m \in \N$. Then, for any $(\eps, \delta)$-DP algorithm $\cM: \{0, 1\}^m \to \{0, 1\}^m$, we have $\E_{\bz \sim \{0, 1\}^m}[\|\cM(\bz) - \bz\|_1] \geq e^{-\eps}m\cdot 0.5(1 - \delta)$.
\end{lemma}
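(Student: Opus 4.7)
The plan is to decompose the expected $\ell_1$ error coordinatewise and then, for each coordinate, apply the $(\eps,\delta)$-DP guarantee to the pair of neighboring inputs that differ only on that coordinate. Concretely, since both $\cM(\bz)$ and $\bz$ take values in $\{0,1\}^m$, we have
\begin{equation*}
\E_{\bz}[\|\cM(\bz) - \bz\|_1] \;=\; \sum_{i \in [m]} \E_{\bz}\bigl[\Pr[\cM(\bz)_i \neq z_i]\bigr],
\end{equation*}
so it suffices to show that each summand is at least $0.5\, e^{-\eps}(1 - \delta)$.

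Fix a coordinate $i \in [m]$, and write $\bz = (z_i, \bz_{-i})$. Averaging the uniform choice of $z_i \in \{0,1\}$ first,
\begin{equation*}
\E_{\bz}\bigl[\Pr[\cM(\bz)_i \neq z_i]\bigr] \;=\; \tfrac{1}{2}\, \E_{\bz_{-i}}\Bigl[\Pr[\cM(0,\bz_{-i})_i = 1] + \Pr[\cM(1,\bz_{-i})_i = 0]\Bigr].
\end{equation*}
For any fixed $\bz_{-i}$, the inputs $(0, \bz_{-i})$ and $(1, \bz_{-i})$ are neighbors, so the $(\eps,\delta)$-DP property applied to the event $\{\cM(\cdot)_i = 1\}$ gives
\begin{equation*}
\Pr[\cM(1,\bz_{-i})_i = 1] \;\leq\; e^{\eps}\, \Pr[\cM(0,\bz_{-i})_i = 1] + \delta,
\end{equation*}
which rearranges to $\Pr[\cM(0,\bz_{-i})_i = 1] \geq e^{-\eps}\bigl(1 - \Pr[\cM(1,\bz_{-i})_i = 0] - \delta\bigr)$. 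Adding $\Pr[\cM(1,\bz_{-i})_i = 0]$ to both sides,
\begin{equation*}
\Pr[\cM(0,\bz_{-i})_i = 1] + \Pr[\cM(1,\bz_{-i})_i = 0] \;\geq\; e^{-\eps}(1 - \delta) + (1 - e^{-\eps}) \Pr[\cM(1,\bz_{-i})_i = 0] \;\geq\; e^{-\eps}(1-\delta),
\end{equation*}
since $1 - e^{-\eps} \geq 0$. Thus $\E_{\bz}[\Pr[\cM(\bz)_i \neq z_i]] \geq \tfrac{1}{2} e^{-\eps}(1-\delta)$, and summing over $i \in [m]$ yields the claimed bound $e^{-\eps} m \cdot 0.5 (1 - \delta)$.

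There is no real obstacle here: the whole argument is essentially the standard per-coordinate reduction to a two-point DP testing instance, extended from pure to approximate DP by carrying the additive $\delta$ term through the single DP inequality. The only point that requires a bit of attention is ensuring that the direction in which DP is applied is the one that leads to a $-\delta$ rather than $+\delta$ in the final bound; choosing to upper-bound $\Pr[\cM(1,\bz_{-i})_i = 1]$ by $e^\eps \Pr[\cM(0,\bz_{-i})_i = 1] + \delta$ (and then rearranging) handles this correctly.
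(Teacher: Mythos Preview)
Your proof is correct and follows essentially the same approach as the paper: decompose the expected $\ell_1$ error coordinatewise, pair each input with the neighbor obtained by flipping that coordinate, and apply the $(\eps,\delta)$-DP inequality once to lower-bound the sum of the two error probabilities by $e^{-\eps}(1-\delta)$. The only cosmetic difference is that the paper sums over all $\bz$ and pairs $\bz$ with $\bz_{\neg i}$, whereas you condition on $\bz_{-i}$ and average over $z_i \in \{0,1\}$; the underlying argument is identical.
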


\section{Algorithm in the Low-Privacy Regime}
\label{sec:alg}

Our algorithm is presented in Algorithm~\ref{alg:dph}. It follows the outline discussed in \Cref{sec:overview} except that we require a post-processing step since adding noise may make the counts/cumulative prevalences not corresponding to valid integer partitions. The post-processing simply selects the valid integer partitions that are closest to the noised counts/cumulative prevalences (Lines~\ref{l:pp1} and~\ref{l:pp2}). We note that this can be done efficiently using isotonic regression (see e.g.~\cite{Suresh19} for more detail).

\begin{algorithm}
\caption{DPAnonHist$_{\eps, n}(\bp)$}
\label{alg:dph}
\begin{algorithmic}[1]
\STATE \textbf{Input: } integer partition $\bp \in \cI_{\leq n}$, \textbf{Parameter: } $\eps \geq 1, n \in \N$
\STATE $m \leftarrow \lceil \sqrt{n} \rceil$
\STATE $\bp^{high,noised} = (p_1, \dots, p_m) + \Geo(e^{-\eps})^{\otimes m}$ \hfill\COMMENT{Add noise to high-count part.}
\STATE $\bphi_{\geq}^{low,noised} = \bphi_{\geq}(p_{m + 1}, \dots)_{[m]} + \Geo(e^{-\eps})^{\otimes m}$ \hfill\COMMENT{Add noise to low-count part.}
\STATE $\bp^{high,priv} \leftarrow \argmin_{\bq \in \cI_{\leq n}} \|\bp^{high,noised} - \bq\|_1$ \hfill\COMMENT{Post-process high-count part.} \label{l:pp1}
\STATE $\bp^{low,priv} \leftarrow \argmin_{\bq \in \cI_{\leq n}} \|\bphi_{\geq}^{low,noised} - \bphi_{\geq}(\bq)\|_1$ \hfill\COMMENT{Post-process low-count part.} \label{l:pp2}
\RETURN $\bp^{high,priv} \cup \bp^{low,priv}$.
\end{algorithmic}
\end{algorithm}

We now prove its privacy and utility guarantees.

\begin{theorem} \label{thm:alg}
For any $\eps \geq 1$ and $n \in \N$, $\text{DPAnonHist}_{\eps, n}$ is $\eps$-DP and its error is at most $O(\sqrt{n} / e^\eps)$ for integer partitions of size at most $n$.
\end{theorem}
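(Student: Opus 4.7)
The plan is to separately establish the two claims (privacy and the $O(\sqrt{n}/e^{\eps})$ error bound), each via short elementary arguments that exploit the rank-based split of Algorithm~\ref{alg:dph}.

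For privacy, I would argue that the function
\[
f(\bp) \;:=\; \bigl((p_1, \dots, p_m),\; \bphi_{\geq}(p_{m+1}, p_{m+2}, \dots)_{[m]}\bigr)
\]
to which independent $\Geo(e^{-\eps})$ noise is added in Lines 3--4 has $\ell_1$-sensitivity at most $1$. The key observation is that if $\bp, \bp' \in \cI_{\leq n}$ are neighbors, then since both are non-increasing integer sequences with $\|\bp - \bp'\|_1 = 1$, they must differ in exactly one coordinate and by exactly $1$. If that coordinate lies in $[m]$, only the first block of $f$ changes, and by $1$ in $\ell_1$. Otherwise only the tail changes (by $1$ in $\ell_1$), and \Cref{obs:l1-cum} applied to the two tails shows that their cumulative-prevalence vectors differ by $1$ in $\ell_1$ as well, a bound that restricting to $[m]$ can only decrease. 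Hence adding independent $\Geo(e^{-\eps})$ noise to each of the $2m$ coordinates is $\eps$-DP by the Geometric mechanism guarantee recalled in \Cref{sec:geom}, and the post-processing in Lines 5--6 together with the final union all preserve DP.

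For utility, I would write $\bp = (p_1, \dots, p_m) \cup (p_{m+1}, p_{m+2}, \dots)$ and apply \Cref{obs:union-lower-error} to decompose
\[
\|(\bp^{high,priv} \cup \bp^{low,priv}) - \bp\|_1 \;\leq\; \|\bp^{high,priv} - (p_1, \dots, p_m)\|_1 + \|\bp^{low,priv} - (p_{m+1}, p_{m+2}, \dots)\|_1.
\]
For the first summand, since $(p_1, \dots, p_m) \in \cI_{\leq n}$, the $\argmin$ definition in Line 5 yields $\|\bp^{high,priv} - \bp^{high,noised}\|_1 \leq \|\mathrm{noise}_1\|_1$, and triangle inequality gives at most $2\|\mathrm{noise}_1\|_1$. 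For the second summand, note that $p_{m+1} \leq n/(m+1) \leq m$, so $\phi_{\geq i}(p_{m+1}, p_{m+2}, \dots) = 0$ for all $i > m$; this means the restriction to $[m]$ in Line 4 incurs no loss, and the same optimality-plus-triangle-inequality argument gives $\|\bphi_{\geq}(\bp^{low,priv}) - \bphi_{\geq}(p_{m+1}, \dots)\|_1 \leq 2\|\mathrm{noise}_2\|_1$, which by \Cref{obs:l1-cum} equals $\|\bp^{low,priv} - (p_{m+1}, \dots)\|_1$. Taking expectations and using $\E[|\Geo(e^{-\eps})|] = O(e^{-\eps})$ for $\eps \geq 1$ across $2m = O(\sqrt{n})$ noise coordinates yields the claimed $O(\sqrt{n}/e^{\eps})$ bound.

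The one conceptual step requiring any care is the sensitivity analysis for the low-count block, where one might worry about a neighboring change ``swapping'' mass across the rank-$m$ boundary and inflating the $\ell_1$-change. The rank-based split makes this essentially trivial: a single unit of $\ell_1$-change to a sorted integer sequence alters precisely one coordinate, so the change lies entirely on one side of the boundary. This is exactly the simplification that motivates replacing the count-based split of \cite{Suresh19} with a rank-based one, and once it is in hand every remaining step is routine bookkeeping with the Geometric mechanism and the triangle inequality.
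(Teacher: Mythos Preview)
Your proposal is correct and follows essentially the same approach as the paper: both establish sensitivity~$1$ for the map $\bp \mapsto (\bp_{[m]}, \bphi_{\geq}(\bp_{[m+1:]})_{[m]})$ via \Cref{obs:l1-cum}, invoke the Geometric mechanism for privacy, and then bound the error using \Cref{obs:union-lower-error}, the truncation-is-lossless observation $p_{m+1} \leq m$, and the argmin-plus-triangle-inequality step to get $2\|\text{noise}\|_1$. The only cosmetic difference is that you argue sensitivity via a one-coordinate case split (the change lies in $[m]$ or not), whereas the paper writes the same bound as a single chain of inequalities without casing.
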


\begin{proof}
For convenience, we define $\bp^{high} = (p_1, \dots, p_m), \bp^{low} = (p_{m + 1}, \dots)$ and $\bphi^{low}_{\geq} = \bphi_{\geq}(\bp^{low})_{[m]}$.

\paragraph{Privacy Analysis.} Notice that the final output is a post-processing of $(\bp^{high,noised}, \bphi_{\geq}^{low,noised})$. Hence, it suffices to show that $(\bp^{high,noised}, \bphi_{\geq}^{low,noised})$ satisfies $\eps$-DP. Observe also that $(\bp^{high,noised}, \bphi_{\geq}^{low,noised})$ is simply a result of the geometric mechanism $\Geo(e^{-\eps})$ applied on $(\bp^{high}, \bphi^{low}_{\geq})$; hence, in light of~\Cref{sec:geom}, it in turn suffices to show that the sensitivity of $\bp \mapsto (\bp^{high}, \bphi^{low}_{\geq})$ is at most one. To show this, let us consider neighboring datasets $\bp, \bp' \in \cI_{\leq n}$, i.e. where $\|\bp - \bp'\|_1 \leq 1$. We have
\begin{align*}
\|(\bp^{high}, \bphi^{low}_{\geq}) - (\bp'^{high}, \bphi'^{low}_{\geq})\|_1 
&= \|\bp_{[m]} - \bp'_{[m]}\|_1 + \|\bphi_{\geq}(\bp_{[m+1:]})_{[m]} - \bphi_{\geq}(\bp'_{[m+1:]})_{[m]}\|_1 \\
&\leq \|\bp_{[m]} - \bp'_{[m]}\|_1 + \|\bphi_{\geq}(\bp_{[m+1:]}) - \bphi_{\geq}(\bp'_{[m+1:]})\|_1 \\
(\text{\Cref{obs:l1-cum}}) &= \|\bp_{[m]} - \bp'_{[m]}\|_1 + \|\bp_{[m+1:]} - \bp'_{[m+1:]}\|_1 = \|\bp - \bp'\|_1 \leq 1,
\end{align*}
which concludes our proof that the algorithm is $\eps$-DP.

\paragraph{Utility Analysis.}
Observe that, since $m \geq \sqrt{n}$, we have $p_{m+1} \leq \frac{n}{m + 1} \leq m$ which implies that $\phi_{\geq r}(\bp^{low})$ for all $r > m$. From this, we have
\begin{align} \label{eq:truncation-free}
\|\bphi_{\geq}^{low} - \bphi_{\geq}(\bp^{low})\|_1 = 0.
\end{align}
Thus, we can derive
\begin{align*}
\|(\bp^{high,priv} \cup \bp^{low,priv}) - \bp\|_1 
&= \|(\bp^{high,priv} \cup \bp^{low,priv}) - (\bp^{high} \cup \bp^{low})\|_1 \\
(\text{From \Cref{obs:union-lower-error}}) &\leq \|\bp^{high,priv} - \bp^{high}\|_1 + \|\bp^{low,priv} - \bp^{low}\|_1 \\
(\text{From \Cref{obs:l1-cum}}) &= \|\bp^{high,priv} - \bp^{high}\|_1 + \|\bphi_{\geq}(\bp^{low,priv}) - \bphi_{\geq}(\bp^{low})\|_1 \\
(\text{From \Cref{eq:truncation-free}}) &= \|\bp^{high,priv} - \bp^{high}\|_1 + \|\bphi_{\geq}(\bp^{low,priv}) - \bphi_{\geq}^{low}\|_1 \\
(\text{Choices of } \bp^{low,priv}, \bp^{high,priv}) &\leq 2\|\bp^{high,noised} - \bp^{high}\|_1 + 2\|\bphi_{\geq}^{low,noised} - \bphi_{\geq}^{low}\|_1 \\
&\leq O(m/e^\eps) = O(\sqrt{n} / e^\eps),
\end{align*}
where the last inequality follows from the expected absolute value of $\Geo(e^{-\eps})$ (see~\Cref{sec:geom}).
\end{proof}

\section{Lower Bound in the High-Privacy Regime}
\label{sec:lb}

We will next prove the lower bound in the high-privacy regime, stated below. Here we only focus on sufficiently small $\eps < 10^{-6}$ to simplify our calculations but we stress that otherwise the lower bound of $\Omega(\sqrt{n}/e^\eps)$ (from \Cref{lem:lb-low-priv}) is already asymptotically tight. We also remark that the assumption $\eps \geq \Omega(\log n/n)$ is required; at $\eps = \Theta(\log n/n)$, our lower bound of $\Omega\left(\sqrt{\frac{n}{\eps} \log\left(\frac{1}{\eps}\right)}\right) = \Omega(n)$ is already the largest one can hope for and thus it cannot increase further even when $\eps$ gets smaller.

\begin{theorem} \label{thm:lb-main}
For any $n \in \N$ and $\frac{\log n}{n} < \eps < 10^{-6}, 0 < \delta < 0.1\eps$, there is no $(\eps, \delta)$-DP algorithm whose error is at most $o\left(\sqrt{\frac{n}{\eps} \log\left(\frac{1}{\eps}\right)}\right)$ for integer partitions of size at most $n$.
\end{theorem}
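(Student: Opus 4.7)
The plan is to reduce from the boolean-vector lower bound in \Cref{lem:vec-dp-lb} via a distance-preserving encoding $\psi:\{0,1\}^m \to \cI_{\leq n}$, guided by the set $V$ underlying Suresh's high-privacy algorithm. Fix a small absolute constant $c > 0$, set $\eta := c\sqrt{\log(1/\eps)/(\eps n)}$, and let $v_i := \lceil (1+\eta)^i/\eta \rceil$ for $i \in \N$. Choose a scale $A$ defined implicitly by $A\log(A\eta) \asymp n\eta$; let $m$ be the largest index with $v_{m+1} \leq A$; and let $a_i := \lfloor A/v_i \rfloor$ for $i \in [m]$, so that $a_i \geq 1$ by construction. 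The encoding $\psi(\bu)$ is the integer partition that, for each $i \in [m]$, contains exactly $a_i$ copies of $v_{i + u_i}$. Since $\sum_i a_i v_{i+1} = O(Am)$ and $Am = O(n)$ by the choice of $A$, we have $\psi(\bu) \in \cI_{\leq n}$.

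For sensitivity, the fact that $v_0 \geq 1/\eta$ ensures $v_{i+1} - v_i = \Theta(\eta v_i)$ for every $i$. When $\bu, \bu'$ differ only at coordinate $i$, the cumulative prevalences $\phi_{\geq r}(\psi(\bu))$ and $\phi_{\geq r}(\psi(\bu'))$ agree outside $r \in (v_i, v_{i+1}]$ and differ by exactly $a_i$ inside; hence by \Cref{obs:l1-cum}, $\|\psi(\bu) - \psi(\bu')\|_1 = a_i(v_{i+1}-v_i) = \Theta(A\eta)$. Writing $\Delta := \Theta(A\eta)$ for (an upper bound on) this per-coordinate quantity, $\psi$ has sensitivity $\Delta$ as a map from $\{0,1\}^m$ to $\cI$. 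Moreover, since the intervals $(v_i, v_{i+1}]$ are disjoint across $i$, we also get the matching lower bound $\|\psi(\bu) - \psi(\bu')\|_1 = \Theta(\Delta \cdot \|\bu - \bu'\|_1)$ for arbitrary $\bu, \bu'$.

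Now let $\cA$ be any $(\eps, \delta)$-DP algorithm for integer partitions of size at most $n$ with expected $\ell_1$-error $E$, and define $\cM: \{0,1\}^m \to \{0,1\}^m$ by $\cM(\bu) := \argmin_{\bu' \in \{0,1\}^m} \|\psi(\bu') - \cA(\psi(\bu))\|_1$. \Cref{fact:group-dp} together with post-processing gives that $\cM$ is $(\Delta\eps, \delta')$-DP with $\delta' := \delta \cdot (e^{\Delta\eps}-1)/(e^\eps-1)$. The triangle inequality on codeword distances yields $\|\cM(\bu) - \bu\|_1 \leq O(1/\Delta) \cdot \|\cA(\psi(\bu)) - \psi(\bu)\|_1$ for every $\bu$, so $\E_{\bz \sim \{0,1\}^m}[\|\cM(\bz)-\bz\|_1] \leq O(E/\Delta)$. \Cref{lem:vec-dp-lb} lower-bounds the same expectation by $0.5\, e^{-\Delta\eps} m (1-\delta')$. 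Picking $c$ small enough forces $\Delta\eps \leq O(1)$, which makes $e^{-\Delta\eps} = \Omega(1)$ and (via $\delta \leq 0.1\eps$) $\delta' \leq 1/2$. Rearranging gives $E = \Omega(m\Delta) = \Omega(A \log(A\eta)) = \Omega(n\eta) = \Omega(\sqrt{n\log(1/\eps)/\eps})$, contradicting $E = o(\sqrt{n\log(1/\eps)/\eps})$.

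The delicate step is the parameter tuning: one must simultaneously enforce $Am \leq n$, $a_i \geq 1$ for every $i \in [m]$, $\Delta\eps \leq O(1)$, and $m\Delta = \Omega(\sqrt{n\log(1/\eps)/\eps})$. The hypothesis $\eps \geq \log n/n$ enters precisely at this point: it guarantees $\log(A\eta) = \Theta(\log(1/\eps))$ rather than something smaller, which pins $\Delta = A\eta = \Theta(1/\eps)$ and hence $\Delta\eps = \Theta(1)$. The remaining work is elementary—bounding cumulative prevalences, handling the rounding in the definitions of $v_i$ and $a_i$, and substituting $n\eta = c\sqrt{n\log(1/\eps)/\eps}$ to obtain the final form of the lower bound.
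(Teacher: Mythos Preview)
Your proposal is correct and follows essentially the same route as the paper: encode boolean vectors into integer partitions so that $\ell_1$ distances scale by a uniform factor $\Theta(\Delta)$, apply group privacy (\Cref{fact:group-dp}) to transfer the $(\eps,\delta)$-DP guarantee to the decoder $\cM$, and invoke \Cref{lem:vec-dp-lb}. The only structural difference is cosmetic: the paper builds the representative counts in $L$ doubling blocks (level $\ell$ has values near $\Delta/(L2^\ell)$ repeated $2^{\ell-1}$ times, evenly spaced within each block), whereas you use a single geometric sequence $v_i \approx (1+\eta)^i/\eta$ with $a_i \approx A/v_i$ copies. Both constructions achieve the same invariant---value times multiplicity is constant, so flipping any bit changes the partition by $\Theta(1/\eps)$ in $\ell_1$---and both land at sensitivity $\Theta(1/\eps)$ and $m\Delta = \Theta(\sqrt{n\log(1/\eps)/\eps})$. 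Your version is arguably cleaner conceptually; the paper's block structure makes the arithmetic (especially the rounding) more explicit and avoids the implicit definition of $A$. One minor point: your explanation of where the hypothesis $\eps \geq (\log n)/n$ enters is slightly off---it is really what keeps $\eta = O(1)$ (so the geometric spacing and the approximation $\log(1+\eta)\approx\eta$ remain valid and the total mass stays below $n$), rather than what pins down $\log(A\eta)$.
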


As stated earlier, our proof relies on an encoding of boolean vectors into integer partitions where the $\ell_1$ distance of a pair of vectors is preserved after encoding up to a constant factor, after an appropriate scaling. The main properties of the encoding are given below.

\begin{lemma} \label{lem:mapping}
Let $n, \Delta \in \N$ be such that $n \geq \Delta > 10\sqrt{n}$. There exists $m \geq \frac{1}{8} \cdot \frac{n}{\Delta}\log\left(\frac{\Delta}{\sqrt{n}}\right)$ and a mapping $\psi: \{0, 1\}^m \to \cI_{\leq n}$ such that $\frac{\|\psi(\by) - \psi(\bz)\|_1}{\|\by - \bz\|_1} \in [\frac{\Delta}{8m}, \frac{\Delta}{2m}]$ for all distinct $\by, \bz \in \{0, 1\}^m$.
\end{lemma}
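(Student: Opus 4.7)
The plan is to make precise the encoding hinted at in \Cref{sec:overview}. Fix $\eta := \Delta/n$ and $m := \lceil \frac{1}{8}\cdot\frac{n}{\Delta}\log(\Delta/\sqrt{n})\rceil$, and build an increasing chain of integers $v_0 < v_1 < \cdots < v_m$ by $v_0 := \lceil\sqrt{n}\rceil$ and $v_i := \lceil(1+\eta)v_{i-1}\rceil$. Let $d_i := v_i - v_{i-1}$ and $c_i := \lceil\Delta/(8m\, d_i)\rceil$. The encoding $\psi(\bu)$ is the integer partition whose slot $i$ (for each $i \in [m]$) contributes $c_i$ parts of size $v_{i-1}$ if $u_i = 0$, and $c_i$ parts of size $v_i$ if $u_i = 1$.

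The crux is the distance formula. For any integer $r \in (v_{i-1}, v_i]$, slots $j > i$ contribute $c_j$ parts of size $\geq r$ regardless of the bit (because $v_{j-1} \geq v_i \geq r$), slots $j < i$ contribute $0$, and slot $i$ contributes $c_i \cdot \ind[u_i = 1]$. Hence $\phi_{\geq r}(\psi(\by)) - \phi_{\geq r}(\psi(\bz)) = c_i(y_i - z_i)$ on this interval and vanishes outside $(v_0, v_m]$. Summing $|\cdot|$ over $r$ and invoking \Cref{obs:l1-cum} yields
\[
\|\psi(\by) - \psi(\bz)\|_1 = \sum_{i=1}^m c_i d_i |y_i - z_i|,
\]
so the target ratio is the average of the $c_i d_i$'s over the coordinates where $\by$ and $\bz$ differ. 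The lower bound $c_i d_i \geq \Delta/(8m)$ is immediate from the ceiling in the definition of $c_i$; the upper bound $c_i d_i \leq \Delta/(8m) + d_i \leq \Delta/(2m)$ follows from $d_i \leq \eta v_m + 1 \leq \Delta/(4m) + 1 \leq 3\Delta/(8m)$, using $\eta v_m \leq \Delta/(4m)$ (proved below) and $m \leq \Delta/8$, the latter being immediate since $\log K \leq K^2$ for $K := \Delta/\sqrt{n} \geq 10$.

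The size bound $\sum_i c_i v_i \leq n$ splits as $(\Delta/(8m))\sum_i v_i/d_i + \sum_i v_i$: the first term is at most $\Delta/(4\eta) + O(\Delta) = n/4 + o(n)$ via $v_i/d_i \leq (1+\eta)/\eta$, while the geometric series gives $\sum_i v_i \leq 2v_m/\eta$, a small constant fraction of $n$. Everything therefore boils down to controlling $v_m$, which is the main technical obstacle. One bounds $v_m \leq (v_0 + 1/\eta)(1+\eta)^m \leq 1.2\sqrt{n}\cdot e^{\eta m}$ and uses $\eta m = \log(\Delta/\sqrt{n})/8$ to get $v_m \leq 1.2\sqrt{n}\cdot K^{1/(8\ln 2)} \approx 1.2\sqrt{n}\cdot K^{0.18}$; the needed inequalities (such as $\eta v_m \leq \Delta/(4m)$, which rearranges to $v_m \leq n/(4m) = 2\Delta/\log K$) then reduce to $\log K \lesssim K^{0.82}$, which holds easily for all $K \geq 10$. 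With these estimates in hand, $\psi$ satisfies all three requirements: the partition lies in $\cI_{\leq n}$, each $c_i d_i$ lies in $[\Delta/(8m), \Delta/(2m)]$, and $m \geq \frac{1}{8}\cdot\frac{n}{\Delta}\log(\Delta/\sqrt{n})$ by construction.
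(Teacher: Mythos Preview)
Your construction is correct and proves the lemma, but it is genuinely different from the paper's. The paper partitions the $m$ coordinates into $L=\lfloor\tfrac12\log(\Delta/\sqrt n)\rfloor$ geometric ``blocks'' of $R=\lfloor n/\Delta\rfloor$ coordinates each; within block $\ell$ the gap $d^\ell$ is \emph{constant}, and the whole block is repeated $2^{\ell-1}$ times, so the per-coordinate contribution $2^{\ell-1}d^\ell$ lands in $[\Delta/(8m),\Delta/(2m)]$ automatically without any per-coordinate tuning. You instead run a single geometric chain $v_i\approx\sqrt n\,(1+\eta)^i$ and normalize each coordinate separately via the multiplicity $c_i=\lceil\Delta/(8m\,d_i)\rceil$. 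Your route is closer to the intuition sketched in \Cref{sec:overview} and the distance identity $\|\psi(\by)-\psi(\bz)\|_1=\sum_i c_id_i|y_i-z_i|$ via \Cref{obs:l1-cum} is clean; the paper's block structure buys simpler constant-tracking, since every quantity is a single floor of an elementary expression and there is no compounding of rounding errors along the chain.

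On that last point, several of your numerical claims are looser than stated and would need tightening in a full proof. The equality $\eta m=\log K/8$ is only approximate (the ceiling gives $\eta m\le \log K/8+\Delta/n$, costing an extra factor up to $e$ in $(1+\eta)^m$); the assertion ``$O(\Delta)=o(n)$'' is false when $\Delta=\Theta(n)$; and the second term $\sum_i v_i\le 2v_m/\eta$ can be as large as $\approx n/3$ at $K=10$, which is not ``small'' but does still combine with the first term to stay below $n$. None of this breaks the argument---the inequalities close once the constants are handled carefully---but the margins are thinner than your sketch suggests.
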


Before we prove \Cref{lem:mapping}, let us show how we can simply use it to derive Theorem~\ref{thm:lb-main}.

\begin{proof}[Proof of Theorem~\ref{thm:lb-main}]
Suppose for the sake of contradiction that there is an $(\eps,\delta)$-DP algorithm $\cA$ whose error
is at most $0.005\Delta$ for integer partitions of size at most $n$ where $\Delta = \left\lceil \frac{1}{10} \sqrt{\frac{n}{\eps} \log\left(\frac{1}{\eps}\right)} \right\rceil$. It is not hard to verify (using the assumption on $\eps$) that $\Delta$ satisfies the precondition in \Cref{lem:mapping}; let $m, \psi$ be as in the lemma statement.

We define an algorithm $\cM$ for privatizing a vector $\bz \in \{0, 1\}^m$ as follows:
\begin{itemize}
\item Run $\cA$ on $\psi(\bz)$ to get $\bp \in \cI_{\leq n}$.
\item Output $\bz^* \in \{0, 1\}^m$ that minimizes $\|\psi(\bz^*) - \bp\|_1$.
\end{itemize}
We start by analyzing $\cM$'s utility. For $\bz \in \{0, 1\}^m$, we have
\begin{align*}
&\E[\|\bz^* - \bz\|_1] \\
(\text{From property of } \psi) &\leq \frac{8m}{\Delta} \cdot \E[\|\psi(\bz^*) - \psi(\bz)\|_1] \\
(\text{From the choice of } \bz^*) &\leq \frac{16m}{\Delta} \cdot \E[\|\cA(\psi(\bz)) - \psi(\bz)\|_1] \\
(\text{From the accuracy of } \cA) &\leq 0.08m.
\end{align*}
Furthermore, for any neighboring $\bz, \bz' \in \{0, 1\}^m$, the property of $\psi$ also implies
\begin{align*}
\|\psi(\bz) - \psi(\bz')\|_1 \leq \frac{\Delta}{2m} \|\bz - \bz'\|_1 \leq \frac{\Delta}{2m} \leq \frac{\Delta}{\frac{1}{4} \cdot \frac{n}{\Delta}\log\left(\frac{\Delta}{\sqrt{n}}\right)} = \frac{4\Delta^2}{n \log(\Delta / \sqrt{n})} \leq \frac{1}{\eps},
\end{align*}
where the last inequality follows from our choice of $\Delta$. Thus, from Fact~\ref{fact:group-dp}, we can conclude that $\cM$ is $(1, \delta')$-DP with $\delta' = \frac{e-1}{e^{\eps}-1} \cdot \delta \leq \frac{1.8}{\eps} \cdot \delta \leq 0.5$, where the latter follows from the assumption on $\delta$. The privacy and utility guarantees of $\cM$ then contradict with \Cref{lem:vec-dp-lb}.
\end{proof}

We now move on to prove \Cref{lem:mapping}. As outlined in \Cref{sec:overview}, our encoding assigns each coordinate with a representative count (called $p^\ell_r$ in the proof below) which are repeated so that the total count assigned to each coordinate is roughly similar. If this coordinate of the vector has value zero, then these counts remain the same. Otherwise, if this coordinate has value one, then we increase these counts so that their values are the same as the next count. The choices of the representative counts are inspired by that of the high-privacy algorithm of~\cite{Suresh19}. Specifically, we start with a count of roughly $\sqrt{n}$ and increase it by a multiplicative factor of roughly $1 + \frac{\Delta}{n}$ each time. Note that for simplicity of calculations we actually divide the representative counts into blocks and make the additive gap between two consecutive ones in each block equal.

To facilitate the proof, we also use the following notations: for a vector $\bu = (u_1, \dots, u_d) \in \R^d$ and $t \in \N$, let $\bu_{\times t} \in \R^{dt}$ denote the vector where each coordinate is repeated $t$ times in consecutive positions; more precisely, the $j$-th coordinate of $\bu_{\times t}$ is equal to $u_{1 + \lfloor(j-1)/d\rfloor}$. Furthermore, we use $\bu \circ \bu'$ to denote the concatenation of vectors $\bu$ and $\bu'$.

\begin{proof}[Proof of \Cref{lem:mapping}]
Let $L = \lfloor 0.5 \log \left(\Delta / \sqrt{n}\right) \rfloor, R = \lfloor n / \Delta \rfloor$ and $m = RL$; clearly, we have $m \geq \frac{1}{8} \cdot \frac{n}{\Delta}\log\left(\frac{\Delta}{\sqrt{n}}\right)$ as claimed. For every $\ell \in [L]$, let $s^\ell = \lfloor \frac{\Delta}{L 2^\ell} \rfloor, d^\ell = \lfloor s^\ell / R \rfloor$ and $p^\ell_r = s^\ell + (R - r) \cdot d^\ell$ for all $r \in \{0, 1, \dots, R\}$. For every vector $\bv \in \{0, 1\}^R$, we define the transformation $\psi^\ell: \{0, 1\}^R \to \N^R$ by
\begin{align*}
\psi^\ell(\bv)_r =
\begin{cases}
p^\ell_r &\text{ if } \bv_r = 0, \\
p^\ell_{r - 1} &\text{ if } \bv_r = 1.
\end{cases}
\end{align*}

Finally, we define $\psi: \{0, 1\}^{RL} \to \N^{(2^L - 1)R}$ by
\begin{align*}
\psi(\bz^1 \circ \bz^2 \circ \cdots \circ \cdots \bz^L) := \psi^1(\bz^1)_{\times 2^{0}} \circ \psi^2(\bz^2)_{\times 2^1} \cdots \circ \psi^L(\bz^L)_{\times 2^{L-1}}.
\end{align*}
where $\bz^\ell := \bz_{\{R(\ell - 1) + 1, \dots R\ell\}}$.

We claim that, for any $\bz \in \{0, 1\}^{RL}$, $\psi(\bz)$ belongs to $\cI_{\leq n}$. To see this, first observe by definition that $\psi(\bz)$ is non-decreasing. Furthermore, its total value is
\begin{align*}
\|\psi(\bz)\|_1
&= \sum_{\ell \in [L]} 2^{\ell - 1} \cdot \|\psi^\ell(\bz^\ell)\|_1 \\
(\text{From definition of } \psi^\ell) &\leq \sum_{\ell \in [L]} 2^{\ell - 1} \left(\sum_{r \in [R]} p^\ell_{r - 1}\right) \\
(\text{From definition of } p^\ell_r) &= \sum_{\ell \in [L]} 2^{\ell - 1} \left(\sum_{r \in [R]} s^\ell + (R  + 1 - r) \cdot d^\ell\right) \\
&\leq \sum_{\ell \in [L]} 2^{\ell - 1} \cdot R \cdot (s^\ell + R \cdot d^\ell) \\
(\text{From our choice of } d^\ell) &\leq \sum_{\ell \in [L]} 2^\ell \cdot R \cdot s^\ell \\
(\text{From our choice of } s^\ell) &\leq \sum_{\ell \in [L]} R\Delta / L \\
&= R \Delta \\
(\text{From our choice of } R) &\leq n.
\end{align*}

Next, note that for all $\ell \in [L]$, we have $\frac{\Delta}{L 2^\ell} \geq \frac{\Delta}{2^{2L}} \geq \sqrt{n} \geq 2n/\Delta$. This means that
\begin{align} \label{eq:rounding-err}
\frac{\Delta}{2L2^\ell} \leq s^\ell \leq \frac{\Delta}{L2^\ell} & &\text{ and } & & \frac{\Delta}{4 L 2^\ell R} \leq d^{\ell} \leq \frac{\Delta}{L 2^\ell R}.
\end{align}
Now, consider any $\by, \bz \in \{0, 1\}^{RL}$. We have
\begin{align*}
&\|\psi(\by) - \psi(\bz)\|_1 \\
(\text{From definition of } \psi) &= \sum_{\ell \in [L]} 2^{\ell - 1} \cdot \|\psi^\ell(\by^\ell) - \psi^\ell(\bz^\ell)\|_1 \\
(\text{From definition of } \psi^\ell) &= \sum_{\ell \in [L]} 2^{\ell - 1} \sum_{r \in [R]} d^\ell \cdot \ind[(\by^\ell)_r \ne (\bz^\ell)_r] \\
&= \sum_{\ell \in [L]} 2^{\ell - 1} d^\ell \sum_{r \in [R]} \ind[(\by^\ell)_r \ne (\bz^\ell)_r].
\end{align*}
From~\eqref{eq:rounding-err}, we have $2^{\ell - 1} d^\ell \in \left[\frac{\Delta}{8LR}, \frac{\Delta}{2LR}\right]$. This, together with the above inequality, implies that
\begin{align*}
\frac{\Delta}{8RL} \|\by - \bz\|_1 \leq \|\psi(\by) - \psi(\bz)\|_1 \leq \frac{\Delta}{2RL} \|\by - \bz\|_1,
\end{align*}
which concludes our proof.
\end{proof}

\section{Discussion and Open Questions}
\label{sec:open}

In this work, we give new upper and lower bounds for the DP anonymized histogram problem that asymptotically close the remaining gaps in literature. Nevertheless, there are several subtle variations to the problems that leave us with a few open questions.
\begin{itemize}
\item \textbf{Knowledge of the Algorithm on $n$}. In the main body of this note, we assume that the algorithm knows that the input $\bp$ belongs to $\cI_{\leq n}$ i.e. that it knows an upper bound $n$ of $\sum_{i} p_i$. This only makes the lower bound stronger, but it makes the algorithm (\Cref{alg:dph}) weaker. Fortunately, we show in Appendix B that the algorithm can be easily extended to the case where no upper bound on $\sum_{i} p_i$ is given, at the cost of an extra $\exp(-\Omega(n))$ term in the error, which is dominated by $O(\sqrt{n}/e^\eps)$ as long as $n \geq \Theta(\eps)$. It remains interesting whether the former term can be avoided in the case $n = o(\eps)$.
\item \textbf{Uniform vs Non-Uniform Lower Bound in terms of $n$.} Our lower bounds are proved against algorithms whose error guarantee must hold for all $\bp \in \cI_{\leq n}$. An interesting question is whether the lower bound remains true if we only require error guarantees for $\bp \in \cI_{n}$ (and the algorithm can perform arbitrarily bad for $\bp \in \cI_{\leq n - 1}$). For this version, it is not hard to show a similar lower bound but with a blow-up of $\eps$ by a factor of two (using a similar technique as in~\cite{Suresh19}, i.e. by having pairs of elements for each coordinate of the boolean vector, and increasing one element and decreasing another equally in the encoding). This means that our asymptotically tight lower bound in the high-privacy regime (Theorem~\ref{thm:lb-main}) remains the same. However, the lower bound for the low-privacy regime becomes $O(\sqrt{n}/e^{2\eps})$. Indeed, this latter bound was the one originally proved in~\cite{Suresh19} and we give a slightly modified proof that yields the lower bound of $\Omega(\sqrt{n} / e^\eps)$ but only against the stronger utility guarantee (for all $\bp \in \cI_{\leq n}$) in \Cref{app:lb-uniform}. It remains open whether the lower bound $O(\sqrt{n}/e^{\eps})$ holds if the error bound is only required to hold for all $\bp \in \cI_{n}$.
\end{itemize}

Since our work and all previous works focus on the \emph{central model} of differential privacy, it might be tempting to consider the more challenging \emph{local model}~\cite{KasiviswanathanLNRS11}. Unfortunately, the anonymized histogram problem generalizes the so-called \emph{Count Distinct} problem--where we simply would like to estimate the number of non-empty buckets--for which a strong lower bound of $\Omega(n)$ is known in the local model for any $\eps = O(1)$~\cite{ChenGKM21}. Nonetheless, the problem may still be interesting for other models of DP, such as Pan-Privacy~\cite{DworkNPRY10} or the Shuffle model~\cite{BittauEMMRLRKTS17,ErlingssonFMRTT19,CheuSUZZ19}. In both models, protocols with non-trivial error bounds of $O_{\eps}(\sqrt{n})$ are known for the Count Distinct problem~\cite{DworkNPRY10,MirMNW11,BalcerCJM21}. Can we also achieve a non-trivial error of $o(n)$ for the anonymized histogram problem when $\eps = O(1)$?

\bibliography{ref}

\newcommand{\etalchar}[1]{$^{#1}$}
\begin{thebibliography}{MMNW11}

\bibitem[ADOS17]{AcharyaDOS17}
Jayadev Acharya, Hirakendu Das, Alon Orlitsky, and Ananda~Theertha Suresh.
\newblock A unified maximum likelihood approach for estimating symmetric
  properties of discrete distributions.
\newblock In {\em ICML}, pages 11--21, 2017.

\bibitem[AS18]{AldaS18}
Francesco Ald{\`{a}} and Hans~Ulrich Simon.
\newblock A lower bound on the release of differentially private integer
  partitions.
\newblock {\em Inf. Process. Lett.}, 129:1--4, 2018.

\bibitem[BC20]{BalcerC20}
Victor Balcer and Albert Cheu.
\newblock Separating local {\&} shuffled differential privacy via histograms.
\newblock In {\em ITC}, pages 1:1--1:14, 2020.

\bibitem[BCJM21]{BalcerCJM21}
Victor Balcer, Albert Cheu, Matthew Joseph, and Jieming Mao.
\newblock Connecting robust shuffle privacy and pan-privacy.
\newblock In {\em SODA}, pages 2384--2403, 2021.

\bibitem[BDB16]{BlockiDB16}
Jeremiah Blocki, Anupam Datta, and Joseph Bonneau.
\newblock Differentially private password frequency lists.
\newblock In {\em NDSS}, 2016.

\bibitem[BEM{\etalchar{+}}17]{BittauEMMRLRKTS17}
Andrea Bittau, {\'{U}}lfar Erlingsson, Petros Maniatis, Ilya Mironov, Ananth
  Raghunathan, David Lie, Mitch Rudominer, Ushasree Kode, Julien Tinn{\'{e}}s,
  and Bernhard Seefeld.
\newblock Prochlo: Strong privacy for analytics in the crowd.
\newblock In {\em SOSP}, pages 441--459, 2017.

\bibitem[BNS19]{BunNS19}
Mark Bun, Kobbi Nissim, and Uri Stemmer.
\newblock Simultaneous private learning of multiple concepts.
\newblock {\em J. Mach. Learn. Res.}, 20:94:1--94:34, 2019.

\bibitem[BS15]{BassilyS15}
Raef Bassily and Adam~D. Smith.
\newblock Local, private, efficient protocols for succinct histograms.
\newblock In {\em STOC}, pages 127--135, 2015.

\bibitem[CGKM21]{ChenGKM21}
Lijie Chen, Badih Ghazi, Ravi Kumar, and Pasin Manurangsi.
\newblock On distributed differential privacy and counting distinct elements.
\newblock In {\em ITCS}, pages 56:1--56:18, 2021.

\bibitem[CSU{\etalchar{+}}19]{CheuSUZZ19}
Albert Cheu, Adam~D. Smith, Jonathan~R. Ullman, David Zeber, and Maxim
  Zhilyaev.
\newblock Distributed differential privacy via shuffling.
\newblock In {\em EUROCRYPT}, pages 375--403, 2019.

\bibitem[DKM{\etalchar{+}}06]{dwork2006our}
Cynthia Dwork, Krishnaram Kenthapadi, Frank McSherry, Ilya Mironov, and Moni
  Naor.
\newblock Our data, ourselves: Privacy via distributed noise generation.
\newblock In {\em EUROCRYPT}, pages 486--503, 2006.

\bibitem[DMNS06]{dwork2006calibrating}
Cynthia Dwork, Frank McSherry, Kobbi Nissim, and Adam Smith.
\newblock Calibrating noise to sensitivity in private data analysis.
\newblock In {\em TCC}, pages 265--284, 2006.

\bibitem[DNP{\etalchar{+}}10]{DworkNPRY10}
Cynthia Dwork, Moni Naor, Toniann Pitassi, Guy~N. Rothblum, and Sergey
  Yekhanin.
\newblock Pan-private streaming algorithms.
\newblock In {\em ICS}, pages 66--80, 2010.

\bibitem[EFM{\etalchar{+}}19]{ErlingssonFMRTT19}
{\'{U}}lfar Erlingsson, Vitaly Feldman, Ilya Mironov, Ananth Raghunathan, Kunal
  Talwar, and Abhradeep Thakurta.
\newblock Amplification by shuffling: From local to central differential
  privacy via anonymity.
\newblock In {\em SODA}, pages 2468--2479, 2019.

\bibitem[EPK14]{ErlingssonPK14}
{\'{U}}lfar Erlingsson, Vasyl Pihur, and Aleksandra Korolova.
\newblock {RAPPOR:} randomized aggregatable privacy-preserving ordinal
  response.
\newblock In {\em CCS}, pages 1054--1067, 2014.

\bibitem[GKM21]{GKM21}
Badih Ghazi, Ravi Kumar, and Pasin Manurangsi.
\newblock User-level differentially private learning via correlated sampling.
\newblock In {\em NeurIPS}, 2021.

\bibitem[GRS12]{GhoshRS12}
Arpita Ghosh, Tim Roughgarden, and Mukund Sundararajan.
\newblock Universally utility-maximizing privacy mechanisms.
\newblock {\em {SIAM} J. Comput.}, 41(6):1673--1693, 2012.

\bibitem[HKR12]{HsuKR12}
Justin Hsu, Sanjeev Khanna, and Aaron Roth.
\newblock Distributed private heavy hitters.
\newblock In {\em ICALP}, pages 461--472, 2012.

\bibitem[HLMJ09]{HayLMJ09}
Michael Hay, Chao Li, Gerome Miklau, and David~D. Jensen.
\newblock Accurate estimation of the degree distribution of private networks.
\newblock In {\em ICDM}, pages 169--178, 2009.

\bibitem[HR18]{HR18}
Godfrey~H. Hardy and Srinivasa Ramanujan.
\newblock {Asymptotic Formulaæ in Combinatory Analysis}.
\newblock {\em Proceedings of the London Mathematical Society},
  s2-17(1):75--115, 01 1918.

\bibitem[HRMS10]{HayRMS10}
Michael Hay, Vibhor Rastogi, Gerome Miklau, and Dan Suciu.
\newblock Boosting the accuracy of differentially private histograms through
  consistency.
\newblock {\em Proc. {VLDB} Endow.}, 3(1):1021--1032, 2010.

\bibitem[HT10]{HardtT10}
Moritz Hardt and Kunal Talwar.
\newblock On the geometry of differential privacy.
\newblock In {\em STOC}, pages 705--714, 2010.

\bibitem[KKMN09]{KorolovaKMN09}
Aleksandra Korolova, Krishnaram Kenthapadi, Nina Mishra, and Alexandros
  Ntoulas.
\newblock Releasing search queries and clicks privately.
\newblock In {\em WWW}, pages 171--180, 2009.

\bibitem[KLN{\etalchar{+}}11]{KasiviswanathanLNRS11}
Shiva~Prasad Kasiviswanathan, Homin~K. Lee, Kobbi Nissim, Sofya Raskhodnikova,
  and Adam~D. Smith.
\newblock What can we learn privately?
\newblock {\em {SIAM} J. Comput.}, 40(3):793--826, 2011.

\bibitem[KS12]{KarwaS12}
Vishesh Karwa and Aleksandra~B. Slavkovic.
\newblock Differentially private graphical degree sequences and synthetic
  graphs.
\newblock In {\em PSD}, pages 273--285, 2012.

\bibitem[MMNW11]{MirMNW11}
Darakhshan~J. Mir, S.~Muthukrishnan, Aleksandar Nikolov, and Rebecca~N. Wright.
\newblock Pan-private algorithms via statistics on sketches.
\newblock In {\em PODS}, pages 37--48, 2011.

\bibitem[MS06]{MishraS06}
Nina Mishra and Mark Sandler.
\newblock Privacy via pseudorandom sketches.
\newblock In {\em PODS}, pages 143--152, 2006.

\bibitem[MT07]{McSherryT07}
Frank McSherry and Kunal Talwar.
\newblock Mechanism design via differential privacy.
\newblock In {\em FOCS}, pages 94--103, 2007.

\bibitem[SU16]{SteinkeU16}
Thomas Steinke and Jonathan Ullman.
\newblock Between pure and approximate differential privacy.
\newblock {\em J. Priv. Confidentiality}, 7(2), 2016.

\bibitem[Sur19]{Suresh19}
Ananda~Theertha Suresh.
\newblock Differentially private anonymized histograms.
\newblock In {\em NeurIPS}, pages 7969--7979, 2019.

\end{thebibliography}
\bibliographystyle{alpha}

\appendix

\section{Proof of \Cref{lem:vec-dp-lb}}

Here we prove \Cref{lem:vec-dp-lb}; we stress that this prove is essentially a straightforward extension of the proof from \cite{Suresh19} for the case $\delta = 0$.

\begin{proof}[Proof of \Cref{lem:vec-dp-lb}]
For every $i \in [m]$, we use $\bz_{\neg i}$ to denote $(z_1, \dots, z_{i-1}, 1 - z_{i}, z_{i+1}, \dots, z_m)$. We have
\begin{align*}
&\E_{\bz \sim \{0, 1\}^m}[\|\cM(\bz) - \bz\|_1] \\
&= \sum_{i \in [m]} \E_{\bz \sim \{0, 1\}^m}[|\cM(\bz)_i - z_i|] \\
&= \sum_{i \in [m]} \Pr_{\bz \sim \{0, 1\}^m}[\cM(\bz)_i \ne z_i] \\
&= \sum_{i \in [m]} \frac{1}{2^m} \cdot \sum_{\bz \in \{0, 1\}^m} \Pr[\cM(\bz)_i \ne z_i] \\
&= \sum_{i \in [m]} \frac{1}{2^{m + 1}} \cdot \sum_{\bz \in \{0, 1\}^m} \left(\Pr[\cM(\bz)_i \ne z_i] + \Pr[\cM(\bz_{\neg i})_i = z_i]\right) \\
(\text{From } (\eps, \delta)\text{-DP guarantee of } \cM) &\geq \sum_{i \in [m]} \frac{1}{2^{m + 1}} \cdot \sum_{\bz \in \{0, 1\}^m} \left(\Pr[\cM(\bz)_i \ne z_i] + e^{-\eps}\left(\Pr[\cM(\bz)_i = z_i] - \delta\right)\right) \\
&\geq \sum_{i \in [m]} \frac{1}{2^{m + 1}} \cdot \sum_{\bz \in \{0, 1\}^m} e^{-\eps}\left(\Pr[\cM(\bz)_i \ne z_i] + \Pr[\cM(\bz)_i = z_i] - \delta\right) \\
&= \sum_{i \in [m]} \frac{1}{2^{m + 1}} \cdot \sum_{\bz \in \{0, 1\}^m} e^{-\eps}\left(1 - \delta\right) \\
&= e^{-\eps} m \cdot 0.5(1 - \delta). 
\end{align*}
\end{proof}

\section{Algorithm for the Case of Unknown $n$}
\label{app:unknown-n}

In this section, we extend our algorithm (\Cref{alg:dph}) to the case where we do not know that $\bp \in \cI_{\leq n}$. We will only consider $\eps \geq 2$ here since otherwise we may run e.g. the 1-DP algorithm from~\cite{Suresh19}. For this case of $\eps \geq 2$, we spend a fixed $\eps$ budget of 1 to estimate the $n = \sum_i p_i$. We then run \Cref{alg:dph} with a slightly inflated value $n'$ of estimated $n$. The point here is that if $n' \geq n$ we may use the utility guarantee of \Cref{alg:dph} to argue about the expected error. Otherwise, if $n' < n$, we may only bound the error trivially as $O(n)$. Nonetheless, the latter case happens only rarely, i.e. with probability $\exp(-\Omega(n))$, which allows us to derive the final error bound. The full algorithm and its analysis are presented below.

\begin{algorithm}
\caption{DPAnonHist$_{\eps}(\bp)$}
\begin{algorithmic}[1]
\STATE \textbf{Input: } integer partition $\bp \in \cI$, \textbf{Parameter: } $\eps \geq 2$
\STATE $n \leftarrow \sum_{i=1}^{\infty} p_i$.
\STATE $\hn \leftarrow n + \Geo(1/e)$.
\STATE $n' \leftarrow 2 \max\{1, \hn\}$.
\STATE $\bp^* \leftarrow$ DPAnonHist$_{\eps - 1, n'}(\bp)$
\RETURN $\bp' \leftarrow \argmin_{\bq \in \cI_{\leq n'}} \|\bq - \bp^*\|_1$.
\end{algorithmic}
\end{algorithm}

\begin{lemma} \label{lem:alg-no-n}
For any $\eps \geq 2$, DPAnonHist$_\eps$ is $\eps$-DP and its expected $\ell_1$-error on any input $\bp \in \cI$ is at most $$O(\sqrt{n}/e^\eps) + \exp(-\Omega(n)).$$
\end{lemma}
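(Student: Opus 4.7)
The algorithm is an adaptive composition of two DP steps. First, the map $\bp \mapsto \sum_i p_i$ has $\ell_1$-sensitivity $1$ on $\cI$ (two integer partitions at $\ell_1$-distance $1$ differ in exactly one coordinate by $\pm 1$), so releasing $\hn = n + \Geo(1/e)$ is $1$-DP by the Geometric mechanism (\Cref{sec:geom}). Second, for every fixed value of $\hn$ and hence of $n' = 2\max\{1, \hn\}$, running $\text{DPAnonHist}_{\eps-1, n'}$ on $\bp$ is $(\eps-1)$-DP by \Cref{thm:alg}; crucially, the privacy proof of \Cref{thm:alg} only uses that the sensitivity of $\bp \mapsto (\bp^{high}, \bphi^{low}_\geq)$ is at most $1$ on all of $\cI$, so its guarantee remains valid even when $\bp \notin \cI_{\leq n'}$. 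Basic adaptive composition then gives that $(\hn, \bp^*)$ is $\eps$-DP, and the final $\argmin$ step is post-processing.

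\textbf{Utility.} Write $Z := \hn - n \sim \Geo(1/e)$ and split the analysis at the event $n' \geq n$. In the \emph{good case} $n' \geq n$, we have $\bp \in \cI_{\leq n'}$, so \Cref{thm:alg} yields $\E[\|\bp^* - \bp\|_1 \mid \hn] = O(\sqrt{n'}/e^{\eps - 1}) = O(\sqrt{n'}/e^\eps)$. Since $\bp$ itself is a feasible candidate for the final $\argmin$, we have $\|\bp' - \bp^*\|_1 \leq \|\bp - \bp^*\|_1$, and the triangle inequality yields $\|\bp' - \bp\|_1 \leq 2\|\bp^* - \bp\|_1$. Using $\sqrt{n'} \leq \sqrt{2n} + \sqrt{2|Z|} + O(1)$ together with $\E[|Z|] = O(1)$ and Jensen's inequality for $\sqrt{\cdot}$, the expected contribution from the good case is $O(\sqrt{n}/e^\eps)$.

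\textbf{Bad case and main obstacle.} In the \emph{bad case} $n' < n$, I use only the trivial bound $\|\bp' - \bp\|_1 \leq \|\bp'\|_1 + \|\bp\|_1 \leq n' + n \leq 2n$. The event $n' < n$ forces $Z < -n/2$ (for $n \geq 2$), which by the exponential lower tail of $\Geo(1/e)$ has probability at most $e^{-\Omega(n)}$, so this case contributes at most $2n \cdot e^{-\Omega(n)} = e^{-\Omega(n)}$. Summing the two contributions yields the claimed bound $O(\sqrt{n}/e^\eps) + \exp(-\Omega(n))$. The only genuine subtlety is the adaptive use of the random parameter $n'$ in the second mechanism, handled by invoking \Cref{thm:alg}'s guarantee for every fixed $n'$ and by noting that its proof did not need $\bp \in \cI_{\leq n'}$; apart from that, the argument is a routine case split exploiting the exponential tails of the Geometric noise.
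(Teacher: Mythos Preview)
Your proof is correct and follows essentially the same approach as the paper: the privacy argument via basic composition (noting that the privacy of \Cref{thm:alg} holds regardless of whether $\bp \in \cI_{\leq n'}$), the case split on $n' \geq n$ versus $n' < n$, the factor-of-two loss from the final $\argmin$, and the Geometric tail bound are all identical in structure. Your handling of $\E[\sqrt{n'}]$ via $\sqrt{n'} \leq \sqrt{2n} + \sqrt{2|Z|} + O(1)$ and Jensen is a mild repackaging of the paper's $\sqrt{m} \leq \sqrt{2n} + \sqrt{\max(0, m-2n)}$ split, but the content is the same.
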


We remark that the first term in the error guarantee above dominates as long as $n \geq \Omega(\eps)$.

\begin{proof}[Proof of \Cref{lem:alg-no-n}]
To see its privacy guarantee, observe that $\hn$ is a result of applying 1-DP Geometric mechanism. Thus, by basic composition of DP and $(\eps-1)$-DP guarantee\footnote{Note that the DP guarantee of DPAnonHist$_{\eps,n}$ holds even when the input $\bp$ does not belong to $\cI_{\leq n}$.} of DPAnonHist$_{\eps-1,n'}$, we can conclude that DPAnonHist$_{\eps}$ is $\eps$-DP as desired.

Next, we prove its utility guarantee. To do this, consider two cases:
\begin{itemize}
\item $n' \geq n$. In this case, we have
\begin{align*}
\E[\|\bp' - \bp\|_1] \leq \E[2\|\bp^* - \bp\|_1] \leq O(\sqrt{n'}/e^\eps),
\end{align*}
where the last inequality follows from the utility guarantee of DPAnonHist$_{\eps-1,n'}$.
\item $n' < n$. In this case, due to the choice of $\bp'$, the error is at most $O(n)$.
\end{itemize}
We can then combine the two cases as follows:
\begin{align*}
\E[\|\bp' - \bp\|_1]
&= \sum_{m \in \N} \Pr[n' = m] \cdot \E[\|\bp' - \bp\|_1 \mid n' = m] \\
&\leq \Pr[n' < n] \cdot O(n) + \sum_{m \geq n} \Pr[n' = m] \cdot O(\sqrt{m} / e^\eps) \\
&\leq \Pr[n' < n] \cdot O(n) + \sum_{m \geq n} \Pr[n' = m] \cdot O((\sqrt{2n} + \sqrt{\max(0, m - 2n)}) / e^\eps) \\
&= O(\sqrt{n}/e^\eps) + \Pr[n' < n] \cdot O(n) + \sum_{m \geq 2n} \Pr[n' = m] \cdot O(\sqrt{m - 2n} / e^\eps) \\
&= O(\sqrt{n}/e^\eps) + \Pr[\Geo(1/e) < -n/2] \cdot O(n) + \sum_{m \geq 2n} \Pr[\Geo(1/e) = m/2 - n] \cdot O(\sqrt{m - 2n} / e^\eps) \\
&= O(\sqrt{n}/e^\eps) + \Pr[\Geo(1/e) < -n/2] \cdot O(n) + \sum_{i \in \N} \Pr[\Geo(1/e) = i] \cdot O(\sqrt{i} / e^\eps) \\
&\leq (\sqrt{n} / e^\eps) + \exp(-\Omega(n)) + O(1/e^\eps),
\end{align*}
where, in the last inequality, we use the tail bound of $\Geo(1/e)$ to bound $\Pr[\Geo(1/e) < -n/2]$.
\end{proof}

\section{Lower Bound in the Low-Privacy Regime}
\label{app:lb-uniform}

In this section, we prove a lower bound of $\Omega(\sqrt{n}/e^\eps)$ on the error of $(\eps, \delta)$-DP algorithm for integer partition of size at most $n$. As stated earlier, this extended the result of~\cite{Suresh19} for the case $\delta = 0$. Furthermore, it also quantitatively improved upon the lower bound of~\cite{Suresh19} which is only $\Omega(\sqrt{n}/e^{2\eps})$; this however is not due to a technical innovation but rather a slight difference in the lower bound (i.e. for $\bp \in \cI_n$ vs for $\bp \in \cI_{\leq n}$). We refer the readers to \Cref{sec:open} for more discussion.

\begin{lemma} \label{lem:lb-low-priv}
For any $\eps > 0$ and $\delta \in [0, 0.9)$, no $(\eps, \delta)$-DP algorithm for integer partition of size at most $n$ has error $o(\sqrt{n}/e^{\eps})$.
\end{lemma}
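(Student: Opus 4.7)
The plan is to reduce this low-privacy lower bound to the boolean vector lower bound (Lemma~\ref{lem:vec-dp-lb}) via a simple distance-preserving encoding. The key observation is that, unlike the high-privacy setting of Theorem~\ref{thm:lb-main}, we do not need any geometric/multiplicative spacing of representative counts — a crude additive spacing by $2$ suffices, because we only need the encoding to be exactly $\ell_1$-isometric and to land in partitions of size at most $n$.

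Concretely, I will pick $m$ to be the largest integer with $m^2 + m \leq n$ (so $m = \Theta(\sqrt{n})$) and set representative counts $c_i := 2(m-i)+1$ for $i \in [m]$, giving $c_1 > c_2 > \cdots > c_m = 1$ with consecutive gaps equal to $2$. I then define $\psi: \{0,1\}^m \to \cI_{\leq n}$ by $\psi(\bz)_i := c_i + \bz_i$ for $i \in [m]$ and $\psi(\bz)_i := 0$ for $i > m$. Since $c_i - c_{i+1} = 2$ while $\bz_{i+1} - \bz_i \leq 1$, the sequence $\psi(\bz)$ is strictly decreasing on $[m]$, hence a valid integer partition; its total mass equals $m^2 + \|\bz\|_1 \leq m^2 + m \leq n$. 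Because the ordering of coordinates is preserved for every $\bz$, we obtain the crucial identity $\|\psi(\by) - \psi(\bz)\|_1 = \|\by - \bz\|_1$ for all $\by, \bz \in \{0,1\}^m$, and in particular neighbors are mapped to neighbors.

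Given any $(\eps,\delta)$-DP algorithm $\cA$ for $\cI_{\leq n}$ with expected $\ell_1$-error at most $\eta$, I would define $\cM: \{0,1\}^m \to \{0,1\}^m$ by $\cM(\bz) := \argmin_{\bz^* \in \{0,1\}^m} \|\psi(\bz^*) - \cA(\psi(\bz))\|_1$. As a post-processing of $\cA$ composed with the neighbor-preserving map $\psi$, the algorithm $\cM$ inherits $(\eps,\delta)$-DP. The triangle inequality together with the optimality of $\bz^*$ and the $\ell_1$-isometry of $\psi$ then gives
\begin{align*}
\|\cM(\bz) - \bz\|_1 \;=\; \|\psi(\cM(\bz)) - \psi(\bz)\|_1 \;\leq\; 2\,\|\cA(\psi(\bz)) - \psi(\bz)\|_1,
\end{align*}
so $\E[\|\cM(\bz) - \bz\|_1] \leq 2\eta$ for every $\bz \in \{0,1\}^m$, and in particular on average over uniform $\bz$.

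Applying Lemma~\ref{lem:vec-dp-lb} to $\cM$ then yields $2\eta \geq 0.5\,e^{-\eps} m (1-\delta)$; since $\delta < 0.9$ and $m = \Theta(\sqrt{n})$, this rearranges to $\eta \geq \Omega(\sqrt{n}/e^\eps)$, which is the claim. There is no real obstacle in executing this plan: the only points to verify are that $\psi(\bz) \in \cI_{\leq n}$ and that $\psi$ is an $\ell_1$-isometry, both immediate from the chosen spacing. I note that the hypothesis $\bp \in \cI_{\leq n}$ (rather than $\bp \in \cI_n$) is used exactly once, when bounding the total mass of $\psi(\bz)$: flipping a single bit of $\bz$ changes $\|\psi(\bz)\|_1$ by $1$, so an $\cI_n$ version of the lemma would require a paired encoding (increase one coordinate, decrease another) that doubles the sensitivity and, via group privacy, costs a factor of two in $\eps$ — precisely the gap discussed in Section~\ref{sec:open}.
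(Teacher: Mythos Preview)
Your proposal is correct and essentially identical to the paper's proof: both use the additive-spacing encoding $\psi(\bz)_i = 2(m-i) + \text{const} + z_i$ with $m = \Theta(\sqrt{n})$, observe that it is an $\ell_1$-isometry from $\{0,1\}^m$ into $\cI_{\leq n}$, and reduce to \Cref{lem:vec-dp-lb}. The only cosmetic differences are your choice of $m$ (largest with $m^2+m \leq n$ versus $\lfloor\sqrt{n}\rfloor$), your $+1$ offset, and your explicit description of the post-processed estimator $\cM$ and the factor-$2$ triangle-inequality step, which the paper leaves implicit.
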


\begin{proof}
Let $m = \lfloor \sqrt{n} \rfloor$. Consider $\psi: \{0,1\}^m \rightarrow \Z^m_{\geq 0}$ defined by $\psi(\bu)_i =2(m-i) + u_i$.

It is simple to verify that $\psi(\bu) \in \cI_{\leq n}$ for all $\bu \in \{0, 1\}^m$ and that $\|\psi(\bu)-\psi(\bu')\|_1 = \|\bu - \bu'\|_1$. As a result, if we have an $(\eps, \delta)$-DP algorithm for $n$-partition has an expected $\ell_1$-error of $o(\sqrt{n}/e^{\eps})$, then it would give an $(\eps, \delta)$-DP algorithm for privatizing $\{0,1\}^m$ with $\ell_1$ error of $o(m/e^\eps)$, contradicting \Cref{lem:vec-dp-lb}.
\end{proof}

\section{Implication on Packing of Integer Partitions Under $\ell_1$ Distance}
\label{app:packing}

Finally, we state below the corollary of our lower bound proof technique (\Cref{sec:lb}) on the size of packings of integer partitions under the $\ell_1$-distance. We remark that the bound on the size of the packing $T$ in~\Cref{cor:packing} is tight up to a constant factor in the exponent; otherwise, using techniques from~\cite{HardtT10}, we would have arrived at a stronger lower bound for $\eps$-DP algorithm for integer partition which would contradict with the upper bound in~\cite{Suresh19}.

\begin{corollary} \label{cor:packing}
Let $n, \Delta \in \N$ be such that $n \geq \Delta > 10\sqrt{n}$. There exists integer partitions $\bp_1, \dots, \bp_T \in \cI_{\leq n}$ such that $\|\bp_i - \bp_j\|_1 \in [0.01\Delta, \Delta]$ for all $i \ne j$ and
\begin{align*}
T = \exp\left(\Omega\left(\frac{n}{\Delta}\log\left(\frac{\Delta}{\sqrt{n}}\right)\right)\right).
\end{align*}
\end{corollary}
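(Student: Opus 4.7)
The plan is to combine the bi-Lipschitz encoding from \Cref{lem:mapping} with a standard Gilbert--Varshamov-style code. Specifically, I would first invoke \Cref{lem:mapping} to obtain a mapping $\psi: \{0,1\}^m \to \cI_{\leq n}$ with $m \geq \frac{1}{8} \cdot \frac{n}{\Delta}\log\left(\frac{\Delta}{\sqrt{n}}\right)$ satisfying $\frac{\Delta}{8m} \leq \frac{\|\psi(\by) - \psi(\bz)\|_1}{\|\by - \bz\|_1} \leq \frac{\Delta}{2m}$ for all distinct $\by, \bz \in \{0,1\}^m$. Since $\|\by - \bz\|_1 \leq m$ for any two boolean vectors, the upper bound automatically gives $\|\psi(\by) - \psi(\bz)\|_1 \leq \Delta/2 \leq \Delta$ for free, so the only remaining task is to find a large subset $S \subseteq \{0,1\}^m$ whose pairwise Hamming distances are bounded below.

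To ensure $\|\psi(\by) - \psi(\bz)\|_1 \geq 0.01\Delta$, by the lower bound in the bi-Lipschitz estimate it suffices to enforce $\|\by - \bz\|_1 \geq 0.08 m$ for all distinct $\by, \bz \in S$. I would construct such an $S$ greedily: start with $S = \emptyset$ and, while possible, add any $\by \in \{0,1\}^m$ that has Hamming distance at least $0.08m$ from every element already chosen. Each addition removes at most $\sum_{i \leq 0.08 m} \binom{m}{i} \leq 2^{H(0.08) m}$ vectors (a standard volume bound with $H$ the binary entropy function) from future consideration, so the process terminates only after at least $2^{(1 - H(0.08)) m}$ vectors are selected; since $H(0.08) < 1$, this yields $|S| \geq \exp(\Omega(m))$.

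Setting $\{\bp_1, \dots, \bp_T\} := \psi(S)$ then gives a family in $\cI_{\leq n}$ with pairwise $\ell_1$-distances in $[0.01\Delta, \Delta]$ and cardinality
\[
T \geq \exp(\Omega(m)) = \exp\!\left(\Omega\!\left(\frac{n}{\Delta}\log\!\left(\frac{\Delta}{\sqrt{n}}\right)\right)\right),
\]
matching the claimed bound.

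I do not anticipate any real obstacle here: all the nontrivial work, namely the construction of a distance-preserving encoding of boolean vectors into integer partitions of size at most $n$, has already been done in \Cref{lem:mapping}, and the Gilbert--Varshamov argument is a well-known short combinatorial computation. The only thing to be slightly careful about is matching constants, i.e.\ checking that $0.08m \cdot \tfrac{\Delta}{8m} = 0.01\Delta$ and that $H(0.08) < 1$ so the exponent is genuinely positive, but both are straightforward.
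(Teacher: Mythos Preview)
Your proposal is correct and follows essentially the same approach as the paper: the paper also applies \Cref{lem:mapping} and then invokes the existence of a constant-rate binary code with relative distance $0.1$ (stated there as \Cref{lem:ecc}), whereas you inline the Gilbert--Varshamov greedy argument with threshold $0.08m$ chosen so that the constant lands exactly at $0.01\Delta$. The two arguments are the same up to these cosmetic differences.
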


To prove~\Cref{cor:packing}, we will use the following simple fact which is equivalent to the existence of error correcting code with constant rate and 0.1 (relative) distance.

\begin{lemma} \label{lem:ecc}
For any $m \in \N$, there exist $\bu_1, \dots, \bu_T \in \{0, 1\}^m$ where $T = \exp(\Omega(m))$ such that $\|\bu_i - \bu_j\|_1 \geq 0.1m$ for all $i \ne j$.
\end{lemma}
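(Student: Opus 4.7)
The plan is to invoke the classical Gilbert--Varshamov volume-packing argument. I set $r := \lceil 0.1 m \rceil$ and let $B(\bu) := \{\bv \in \{0,1\}^m : \|\bu - \bv\|_1 < r\}$ be the open Hamming ball of radius $r$ around $\bu$. Then I build the $\bu_i$'s greedily: start with $\bu_1$ arbitrary, and at each step pick any $\bu_{t+1} \in \{0,1\}^m \setminus \bigcup_{j \leq t} B(\bu_j)$, halting when no such point exists. By construction, every pair $\bu_i, \bu_j$ with $i \ne j$ satisfies $\|\bu_i - \bu_j\|_1 \geq r \geq 0.1m$, so the distance requirement is immediate from the construction.

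The only real content is to lower bound the terminal value of $T$. The greedy process halts exactly when $B(\bu_1) \cup \cdots \cup B(\bu_T)$ covers $\{0,1\}^m$, so $T \cdot \max_{\bu} |B(\bu)| \geq 2^m$. Here I will invoke the standard entropy volume bound $\sum_{k=0}^{\lfloor \alpha m \rfloor} \binom{m}{k} \leq 2^{H(\alpha) m}$ valid for $\alpha \in (0, 1/2)$, where $H(\alpha) := -\alpha \log_2 \alpha - (1-\alpha)\log_2(1-\alpha)$. Applied at $\alpha = 0.1$, this gives $|B(\bu)| \leq 2^{H(0.1) m}$, hence $T \geq 2^{(1 - H(0.1)) m}$. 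Since numerically $H(0.1) < 0.47 < 1$, this is $\exp(\Omega(m))$, as required.

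I do not expect any serious obstacle: both the greedy packing step and the entropy-based ball-volume bound are textbook. The only mild bookkeeping is the corner case of very small $m$, where $\lceil 0.1m \rceil$ is dominated by rounding; this is absorbed into the implicit constant in $\Omega(m)$ by taking $T = 1$ for $m$ below some fixed threshold, which trivially satisfies the conclusion.
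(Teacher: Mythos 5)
Your proof is correct. The paper does not actually give a proof of this lemma; it simply asserts it as a standard fact equivalent to the existence of a binary code with constant rate and relative distance $0.1$, which is exactly the Gilbert--Varshamov bound you reprove via greedy packing and the entropy bound on Hamming ball volume.
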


\begin{proof}[Proof of \Cref{cor:packing}]
Let $m, \psi$ be as in \Cref{lem:mapping}. Now, let $\bu_1, \dots, \bu_T \in \{0, 1\}^m$ be the vectors as guaranteed by \Cref{lem:ecc} where $T = \exp(\Omega(m)) = \exp\left(\Omega\left(\frac{n}{\Delta}\log\left(\frac{\Delta}{\sqrt{n}}\right)\right)\right)$. Let $\bp_i := \psi(\bu_i)$ for $i \in [T]$.
From how we select $\bu_1, \dots, \bu_T$ and the property of $\psi$, we can conclude that $0.01\Delta \leq \|\bp_i - \bp_j\|_1 \leq \Delta$ as desired.
\end{proof}

\end{document}